\newcommand{\code}{\ttfamily\bfseries}
\newcommand{\dfn}{\bfseries\itshape}
\newcommand{\be}[1]{\begin{equation}\label{#1}}
\newcommand{\ee}{\end{equation}}
\newcommand{\eq}[1]{(\ref{#1})}
\newcommand{\Eq}[1]{(\ref{#1})}
\newcommand{\bc}{\begin{center}}
\newcommand{\ec}{\end{center}}
\newcommand{\Proof}{{{\it ~\,Proof. }}}
\def\qed{\hskip 3pt \hbox{\vrule width4pt depth2pt height6pt}}
\newcommand{\ceil}[1]{\lceil{#1}\rceil}
\newcommand{\cA}{{\cal A}}
\newcommand{\cD}{{\cal D}}
\newcommand{\cE}{{\cal E}}
\newcommand{\cG}{{\cal G}}
\newcommand{\cT}{{\cal T}}
\newcommand{\cV}{{\cal V}}
\newcommand{\bfu}{{\boldsymbol u}}
\newcommand{\bfv}{{\boldsymbol v}}
\newcommand{\bfx}{{\boldsymbol x}}
\newcommand{\bfy}{{\boldsymbol y}}
\renewcommand{\le}{\leqslant}
\renewcommand{\leq}{\leqslant}
\renewcommand{\ge}{\geqslant}
\renewcommand{\geq}{\geqslant}
\newcommand{\trunc}[1]{\left\lfloor {#1} \right\rfloor}
\newcommand{\Ftk}{\smash{\{0,1\}^k}}
\newcommand{\C}{\mathbb{C}}
\DeclareMathOperator{\wt}{wt}
\newcommand{\E}{\sf E}
\newcommand{\Tref}[1]{Theo\-rem\,\ref{#1}}
\newcommand{\Cref}[1]{Co\-rol\-la\-ry\,\ref{#1}}
\theoremstyle{plain} \theorembodyfont{\normalfont\slshape}
\newtheorem{thm}{Theorem$\!$}
\newenvironment{theorem}{\begin{thm}\hspace*{-1ex}{\bf.}}{\end{thm}}
\newtheorem{prop}[thm]{Proposition$\!$}
\newtheorem{lem}[thm]{Lemma$\!$}
\newenvironment{lemma}{\begin{lem}\hspace*{-1ex}{\bf.}}{\end{lem}}
\newtheorem{cor}[thm]{Corollary$\!$}
\newtheorem{defi}{Definition.}
\newenvironment{definition}{\begin{defi}\hspace*{-1.5ex}}{\end{defi}}
\newtheorem{exam}{Example$\!$}
\newenvironment{example}{\begin{exam}\hspace*{-1ex}{\bf .}}{\end{exam}}
\newtheorem{remrk}{Remark$\!$}
\definecolor{Codecolor}{named}{White}  
\def\screenrule{\hbox{\vrule width 0.85\columnwidth depth0pt height0pt}}
\newcommand{\codemode}[1]{%
\begin{center}%
\fcolorbox{black}{Codecolor}{%
\begin{ttfamily}\begin{bfseries}%
\shortstack[l]{%
\screenrule\\[0.00ex]%
#1\\[0.0ex]$\,$%
}
\end{bfseries}\end{ttfamily}%
}
\end{center}
}
\newcommand{\Copen}{\mbox{\{\kern-5.50pt\{}}
\newcommand{\Cclose}{\mbox{\}\kern-5.50pt\}}}
\newcommand{\Cslash}{\mbox{$\backslash\kern-6.02pt\backslash$}}
\begin{document}

\title{Rewriting Codes for Flash Memories}

\author{\large Eitan Yaakobi, Hessam Mahdavifar, Paul H. Siegel, Alexander Vardy, Jack K. Wolf,
\thanks{E.\ Yaakobi, H.\ Mahdavifar, P.H.\ Siegel, A.\ Vardy, and J.K.\ Wolf are with the Department of Electrical and Computer Engineering, University of California at San Diego, La Jolla, CA 92093, U.S.A. (e-mail: \{eyaakobi, hessam, psiegel, avardy, jwolf\}@ucsd.edu).}
}
\maketitle

\bibliographystyle{IEEEtranS}
\begin{abstract}
Flash memory is a non-volatile computer memory comprising blocks of cells, wherein each cell can take on~$q$~dif\-ferent values or \emph{levels}. While increasing the cell level is easy,~reducing the level of a cell can be accomplished only by erasing an entire block. Since block erasures are highly undesirable, coding schemes --- known as \emph{floating codes} (or \emph{flash codes}) and \emph{buffer codes} --- have~been~de\-signed in order to maximize the number of times that information stored in a flash memory can be written (and re-written) prior to incurring a block erasure.

An $(n,k,t)_q$ flash code $\C$ is a coding scheme for storing $k$ information bits in $n$ cells in such a way that any sequence of up to $t$ writes
can be accommodated without a block erasure. The total number of available level transitions in $n$ cells is $n(q{-}1)$, and the \emph{write deficiency} of $\C$, defined\linebreak[4.0] as $\delta(\C) = n(q{-}1)-t$, is a measure of how close the code comes to perfectly utilizing all these transitions. In this paper, we show a construction of flash codes with write deficiency $O(qk\log k)$ if $q \ge \log_2k$, and at most \smash{$O(k\log^2 k)$} otherwise.

An $(n,r,\ell,t)_q$ buffer code is a coding scheme for storing a buffer of $r$ $\ell$-ary symbols such that for any sequence of $t$ symbols it is possible to successfully decode the last $r$ symbols that were written. We improve upon a previous upper bound on the maximum number of writes $t$ in the case where there is a single cell to store the buffer. Then, we show how to improve a construction by Jiang et al. that uses multiple cells, where $n\geq 2r$.
\end{abstract}
\begin{IEEEkeywords}
Coding theory, flash memories, flash codes, buffer codes.
\end{IEEEkeywords}

\section{Introduction}\label{sec:introduction}
Flash memories are, by far, the most important type of nonvol\-atile computer memory in use today. Flash devices are employ\-ed~widely
in mobile, embedded, and mass-storage applications, and the growth in this sector continues at a staggering pace.

A flash memory consists of an array of floating-gate \emph{\dfn cells},
organized into \emph{\dfn blocks} (a typical block contains about $2^{20}$ cells).
The level or ``state'' of a cell is a function
of the amount of charge (electrons) trapped within it. In
\emph{multilevel flash cells}, voltage is quantized to $q$ discrete
threshold values; consequently the level of each cell can be modeled
as an integer~in~the range $0,1,\ldots,q{-}1$. Nowadays, the parameter $q$ itself can
range from $q=2$ (the conventional two-state case) up to $q = 16$ and it can reach even higher values~\cite{GT05}.
The most conspicuous property of flash-storage technology is its
inherent asymmetry between cell programming (charge placement) and
cell erasing (charge removal). While adding charge to a single cell
is a fast and simple operation, removing charge from a cell is very
difficult. In fact, flash technology {does not allow} a~single
cell to be erased --- rather, {only entire blocks} can
be erased. Such \emph{\dfn block erasures} are not only
time-consuming, but also degrade the physical
quality of the memory. For example, a typical block in a multilevel
flash memory can tolerate only about $10^4$ or even fewer erasures before it
becomes unusable, and as such the lifetime and performance of the memory is highly
correlated with the frequency of block erasure operations.
Therefore, it is of importance to design coding schemes
that maximize the number of times information stored in
a flash memory can be written (and re-written) prior to
incurring a block erasure.

Such coding schemes --- known as \emph{\dfn floating codes} (or \emph{\dfn flash codes}) and \emph{\dfn buffer codes} --- were recently introduced in~\cite{J07,JBB07,BJB07}. Since then, several more papers on this subject have appeared in the literature~\cite{FLM08,JBB10,JB08,JLSB09,MSVWY09,YVSW08}. It should be pointed out that flash codes and buffer codes can be regarded as examples of memories with constrained source, which were described in~\cite{JLSB09}. Yet another example of such codes are the write-once memory (WOM) codes~\cite{FS84,RS82,CGM86}, that have been studied since the early 1980s. In fact, flash codes may be regarded as a generalization of WOM-codes. Slightly different and yet very related are the rank modulation codes~\cite{JMSB10,JSB10}. In rank modulation, the information is not stored according to the exact cell levels but rather by the cell permutation which is derived from the ordering of these levels.

An $(n,k,t)_q$ flash code $\C$ is a coding scheme for
storing $k$ information bits in $n$ flash-memory cells,
with $q$ levels each, in such a~way that any sequence
of up to $t$ writes can~be~accommodated without incurring
a block erasure. In the literature on flash codes,
a \emph{\dfn write} is always a bit-write --- that is,
a change $0 \to 1$ or $1 \to 0$ in the value of
one of the $k$ information bits. Observe that in order to accommodate such a write,
at least~one of the $n$ cells must transition from
a lower level to a~higher level (since a cell's level,
determined by its charge, can only increase).
On the other hand, the total number of available~level
transitions in $n$ flash cells is $n(q{-}1)$. Thus, throughout this
paper, we characterize the performance of a flash
code $\C$ in terms of its \emph{\dfn write deficiency}, defined as
$\delta(\C) \,{=}\; n(q{-}1)\,{-}\,t$. 
According to the foregoing discussion, $\delta(\C)$ is~a~measure~of
how close $\C$ comes to perfectly utilizing all the available
cell-level transitions: exactly one per write. 
The primary goal in designing flash codes can thus be expressed as
\emph{minimizing deficiency}.

What is the smallest possible write deficiency
$\delta_q(n,k)$~for~an $(n,k,t)_q$ flash
code, and how does it behave asymptotically~as the code parameters
$k$ and $n$ get large? The best-known {lower bound}, due to Jiang, Bo\-hos\-sian, and Bruck \cite{JBB07}, asserts that
\be{lower-bound}
\delta_q(n,k)
\, \ge \
\frac{1}{2}\bigl(q-1\bigr) \min\{n,k{-}1\}
\ee
How closely can this bound be approached by code constructions? It appears that the answer to this question depends on the relationship between $k$ and $n$. In this paper, we are concerned mainly with the case where both $k$ and $n$ are large, and $n$ is much larger than $k$ (in particular, $n \ge k^2$). In Section\,\ref{sec5}, we consider the case where $k/n$ is a constant. At the other end of the spectrum, the case $k > n$ has been studied in~\cite{JLSB09}.

The first construction of flash codes for large $k$ was~reported by Jiang and Bruck~\cite{JB08,JBB10}. In this construction, the $k$ information bits are partitioned into $m_1 \,{=}\, k/k'$ subsets of $k'$ bits~each~(with $k' \le 6$) while the memory cells are subdivided into $m_2 \ge m_1$ groups of $n'$ cells each. Additional memory cells (called \emph{\dfn index cells}) are set aside to indicate for each subset of $k'$ bits which group of $n'$ memory cells is used to store them. The deficiency of the resulting flash codes is \smash{$O(\sqrt{qn})$}. Note that for $n \ge k$, the lower bound on write deficiency in \eq{lower-bound} behaves as $\Omega(qk)$, and thus does not depend on $n$. Consequently, the gap between the Jiang-Bruck construction~\cite{JB08} and the lower bound could~be arbitrarily large, especially when $n$ is much larger than $k$.

In~\cite{YVSW08}, a different construction of flash codes was proposed. These codes are based upon representing the $n$ memory cells as a high-dimensional array, and achieve a write deficiency of $O(qk^2)$. Crucially, the deficiency of these codes does \emph{not} depend on $n$. Nevertheless, there is still a significant gap between $O(qk^2)$ --- which is the best currently known deficiency result --- and the lower bound of $\Omega(qk)$.

In this paper, we present a new construction of flash codes
which reduces the gap between the upper and lower bounds
on write deficiency to a factor that is \emph{logarithmic}
in the number of information bits $k$.
This result is arrived at in several stages.
As a starting point, we use the ``indexed'' flash codes
of Jiang and Bruck~\cite{JB08}. In Section\,\ref{sec3},
we develop new encoding and decoding procedures for such
codes that eliminate the need for index cells in the
Jiang-Bruck construction~\cite{JB08}. The write
deficiency achieved thereby 
is $O(qk^2)$, which coincides with the main result
of~\cite{YVSW08}. When the encoding procedure developed
in Section\,\ref{sec3} reaches its limit, there are
still potentially numerous unused cell-level transitions.
In Section\,\ref{sec4}, we show how to take advantage of these
transitions in order to accommodate even more
writes. To this end, we introduce a new indexing
scheme, which is invoked only after the encoding method of
Section\,\ref{sec3} reaches its limit. Thereupon, we extend this idea
recursively, through
$\ceil{\log_2\!k}$ different indexing stages.
This leads to a result, established in \Tref{main}, stating that
\be{bounds}
\Omega\bigl(qk\bigr)
\: \leq \:
\delta_q(n,k)
\: \leq \:
O\bigl(\max\{q,\log_2\! k\}\, k\log k \bigr)
\ee
for all $n \ge k^2$, where the upper bound is achieved constructively by the flash codes described in Section\,\ref{sec4}.
In Section~\ref{sec5}, we present and discuss constructions of flash codes for the case where the number of memory cells $n$ is not significantly larger than the number of bits $k$.

The other type of codes we discuss in this paper are the buffer codes. An $(n,r,\ell,t)_q$ buffer code is a coding scheme for storing a buffer of $r$ $\ell$-ary symbols such that for any sequence of $t$ symbol writes, it is possible to successfully decode the last $r$ symbols that were written without a block erasure. Given a buffer of $r$ $\ell$-ary symbols that has to be stored in $n$ $q$-ary cells, the goal is to maximize the number of writes $t$.

In Section~\ref{sec:buffer codes}, we formally define buffer codes. Then, we study two extreme cases where the number of cells is either one or very large. For the former case, Jiang et al. gave in~\cite{BJB07,JBB10} a construction as well as an upper bound on the number of writes. Their construction works for $n=1, \ell=2$ and guarantees $t=\left\lfloor\frac{q}{2^{r-1}}\right\rfloor + r-2$ writes. The upper bound stated in~\cite{BJB07,JBB10} for $n=1$ asserts that $$t\leq \left\lfloor\frac{q-1}{\ell^r-1}\right\rfloor\cdot r+\left\lfloor\left((q-1)\bmod(\ell^r-1)+1\right)\right\rfloor.$$  We will show how to improve this bound such that for $q\geq \ell^r$, $$t\leq\left\lfloor\frac{q-\ell^r}{\frac{1}{r}\sum_{d|r}\varphi(\frac{r}{d})\ell^d}\right\rfloor+r,$$
where $\varphi$ is Euler's $\varphi$ function.

If the buffer is binary ($\ell=2$) and the number of cells is significantly larger than the buffer size $r$, then a trivial upper bound on the number of writes $t$ is $n(q-1)$. 
Jiang et al. showed in~\cite{BJB07,JBB10} how to achieve $t=(q-1)(n-2r+1)+r-1$ writes. Assume that $q=2$, then the number of writes is $n-r$ and after the $i$-th write, the buffer is stored between cells $i+1$ and $i+r$. If $q>2$, then the cell levels are used layer by layer, where first only levels zero and one are used, then one and two, and so on. In the transition from one layer to another, first the buffer is copied and stored in the new layer and then more writes are allowed. Thus, this construction allows $n-r$ writes on the first layer and $n-2r+1$ more writes in all other layers, so the total number of writes is $t=n-r+(q-2)(n-2r+1)=(q-1)(n-2r+1)+r-1$. We will show how to improve this construction such that in every transition between layers, the buffer is stored cyclically in the cells and thus is not copied as before. This improves the number of writes to $(q-1)(n-r)$.

\section{Preliminaries and Flash Codes Definition}\label{sec:preliminaries}
Let us now give a precise definition of {flash codes}
that were introduced in the previous section.
We~use~$\{0,1\}^k$ to denote the set of
binary vectors of length $k$, and refer to the elements of this
set as \emph{\dfn information vectors}. The set of possible levels for
each cell is denoted by $\cA_q = \{0,1,\ldots,q{-}1\}$ and thought
of as a subset of the integers. The $q^n\!$ vectors of length $n$
over $\cA_q$ are called \emph{\dfn cell-state vectors}. With this notation,
any flash code $\C$ can be specified in~terms~of two
functions: an encoding map $\cE$ and a decoding map $\cD$.
The \emph{\dfn decoding map} $\cD\!: \cA_q^n \to \Ftk$ indicates
for each cell-state vector $\bfx \,{\in}\, \cA_q^n$ the
corresponding information vector. In~turn, the \emph{\dfn encoding map}
\smash{$\cE\!: \{0,1,\ldots,k{-}1\} {\times} \cA_q^n \to \cA_q^n \cup \{\E\}$}
assigns to every index $i$ 
and cell-state vector \smash{$\bfx \,{\in}\, \cA_q^n$},\,
another cell-state vector $\bfy = \cE(i,\bfx)$ such that
$y_j \,{\ge}\, x_j$ for all $j$ and $\cD(\bfy)$ differs
from $\cD(\bfx)$ \emph{only} in the $i$-th position.
If no such $\bfy \,{\in}\, \cA_q^n$ exists,
then $\cE(i,\bfx) = \E$ indicating that block
erasure is required. To~boot\-strap the encoding
process, we assume that the initial state of the
$n$ memory cells is $(0,0,\dots,0)$.
Henceforth, iteratively applying the encoding map, 
we can determine how \emph{any sequence} of transitions
$0 \to 1$ or $1 \to 0$ in the $k$ information bits maps
into a sequence of cell-state vectors, eventually terminated
by the block erasure. This leads to the following
definition.
\begin{definition}
An $(n,k)_q$ flash code\/ $\C(\cD,\cE)$
\emph{\dfn guarantees $t$ writes} if for all
sequences of up to $t$ transitions\, $0 \to 1$
or\, $1 \to 0$~in~the $k$ information bits,
the encoding map $\cE$ does not produce the block
erasure symbol\/ $\E$. If so, we say that\/ $\C$
is an $(n,k,t)_q$ code, and define the
\emph{\dfn deficiency of $\C$} as
$\delta(\C) \,{=}\; n(q{-}1)\,{-}\,t$.
\end{definition}

In addition to this 
definition, we will also
use the following terminology. Given a vector
$\bfx = (x_1,x_2,\dots,x_m)$~over $\cA_q$,~we define
its {\dfn weight} as
$\wt(\bfx) = x_1 + x_1 + \cdots + x_m$
(where the addition is over the integers), and
its {\dfn parity} as $\wt(\bfx)\!\!\mod 2$.

\section{Two-Bit Flash Codes}\label{sec:two bits}
In this section, we present a construction of flash codes that uses $n$ $q$-ary cells to store $k=2$ bits. In~\cite{JBB07}, a construction with these parameters was presented and was shown to be optimal. The construction we present in this section will be proved to be optimal as well and we believe that it is more intuitive.

In this construction, the leftmost and rightmost cells correspond to the first and second bit, respectively. When rewriting, assume the first bit changes its value, then the leftmost cell of level less than $q-1$ is increased by one level. Similarly, whenever the second bit changes its value, the rightmost cell of level less than $q-1$ is increased by one level. 
In general, the cell-state vector has the following form: $$(q-1,\ldots,q-1,x_i,0,\ldots,0,x_j,q-1,\ldots,q-1),$$ where $0< x_i,x_j\leq q-1$. This principle repeats itself until only one cell is left with level less than $q-1$. Then, this cell is used to store two bits according to its residue modulo $4$. If this residue is $0,1,2,3$ then the value of the bits is $(v_1,v_2) = (0,0),(1,0),(0,1),(1,1)$, respectively. The construction is presented for odd values of $q$ and we will discuss later how to modify it for even values as well. In the remainder of the paper, these maps are described algorithmically, using (C-like) pseudo-code notation.\\
\noindent
\textbf{Decoding map $\cD_{2B}$\kern1pt:}
The input to this map is a cell-state vector $\bfx = (x_1,x_2,\ldots,x_n)$. The output is the corresponding two-bit information vector $(v_1,v_2)$.
\codemode{%
~$i_1$\:=\:find\_left\_cell($y_1,y_2,\ldots,y_n$);\\[0ex]
~$i_2$\:=\:find\_right\_cell($y_1,y_2,\ldots,y_n$);\\[0ex]
~if($i_2$\;==\;$0$)\,\; /\!/\;\textrm{\sl all cells are full}\\[0ex]
~\Copen\;$v_{1}$\;=\;$q$\;-\;$1$(mod $2$);\;$v_{2}$\;=\;$\lfloor$(($q$\;-\;$1$)(mod $4$))/2$\rfloor$;\,\Cclose\\[0ex]
~if\;($i_1$\;==\;$i_2$)    /\!/\;\textrm{\sl there is only one non-full cell}\\[0ex]
~\Copen\;$v_{1}$\;=\;$y_{i_1}$(mod $2$);\;$v_{2}$\;=\;$\lfloor$($y_{i_1}$(mod $4$))/2$\rfloor$;\,\Cclose\\[0ex]
~if\;($i_1$\;!=\;$i_2$)    /\!/\;\textrm{\sl there are at least two non-full cells}\\[0ex]
~\Copen\;$v_{1}$\;=\;$y_{i_1}$(mod $2$);\;$v_{2}$\;=\;$y_{i_2}$(mod $2$);\,\Cclose\\[0ex]
}
\noindent
\textbf{Encoding map $\cE_{2B}$\kern1pt:}
The input to this map is a cell-state vector $\bfx = (x_1,x_2,\ldots,x_n)$, and an index $j\in\{1,2\}$ of the bit that has changed. Its output is either a new cell-state vector $\bfy = (y_1,y_2,\ldots,y_n)$ or the erasure symbol $\E$.\hspace*{3ex}
\codemode{%
~($y_1,y_2,\ldots,y_n$)\;=\;($x_1,x_2,\ldots,x_n$); \\[0ex]
~$i_1$\:=\:find\_left\_cell($y_1,y_2,\ldots,y_n$);\\[0ex]
~$i_2$\:=\:find\_right\_cell($y_1,y_2,\ldots,y_n$);\\[0ex]
~if($i_2$\;==\;$0$) return\;$\E$;\\[0ex]
~if\;($i_1$\;==\;$i_2$)    /\!/\;\textrm{\sl there is only one non-full cell}\\[0ex]
~\Copen\;if($j$\;==\;$2$)\;$a$\;=\;$2$;\;\\[0ex]
~\hspace{0.4ex} else\;\,$a$\;=\;$j$\;+\;$2\cdot$($y_{i_1}$(mod $2$));\\[0ex]
~\hspace{0.4ex} if($y_{i_1}$\;+\;$a$\;>\;$q$\;-\;$1$) return\;$\E$;\\[0ex]
~\hspace{0.4ex} else\;\Copen\;$y_{i_1}$\;=\;$y_{i_1}$\;+\;$a$;\,return;\,\Cclose\;\Cclose\\[0ex]
~\hspace{0.0ex}$y_{i_j}$\:=\:$y_{i_j}$\;+\;$1$;\;\\[0ex]
~if\;(($i_2$\;-\;$i_1$\;==\;$1$)\;$\land$\;($y_{i_j}$\;==\;$q$\;-\;$1$))\\[0ex]
~\Copen\;$v_{i_j}$\;=\;$0$;\;$v_{i_{3-j}}$\;=\;$y_{i_{3-j}}$(mod $2$);\\[0ex]
~\hspace{1.6ex}$a$\:=\:$2\cdot v_2$\;+\;$v_1$\;-\;($y_{i_{3-j}}$(mod $4$));\\[0ex]
~\hspace{1.6ex}if($a$\;<\;$0$) $y_{i_{3-j}}$\;=\;$y_{i_{3-j}}$\;+\;$4$\;+\;$x$;\\[0ex]
~\hspace{1.6ex}else $y_{i_{3-j}}$\;=\;$y_{i_{3-j}}$\;+\;$a$;\;\Cclose\\[0ex]
}

The function {\code find\_left\_cell($y_1,y_2,\ldots,y_n$)} finds the leftmost cell of level less than $q-1$ and if there is not such a cell then it returns $n+1$. Similarly, the function {\code find\_right\_cell($y_1,y_2,\ldots,y_n$)} finds the rightmost cell of level less than $q-1$ and if there is not such a cell then it returns $0$. The notation $y_{i_j}$ stands for the variable $y_{i_1}$ in case $j=1$, and $y_{i_2}$ if $j=2$. The same rule applies to $y_{i_{3-j}}$. The symbol $\land$ stands for the logical operator ``and''. The next theorem proves the number of writes this construction guarantees.
\begin{theorem}
If there are $n$ $q$-level cells and $q$ is odd, then the code \/ $\C(\cD_{2B},\cE_{2B})$ guarantees at least $t=(n-1)(q-1)+\left\lfloor\frac{q-1}{2}\right\rfloor$ writes before erasing.
\end{theorem}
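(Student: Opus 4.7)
The plan is to partition the encoding process into a \emph{two-cell phase} (Phase~A), a single \emph{merge write}, and a \emph{single-cell phase} (Phase~B). Throughout Phase~A, the state contains two non-full active cells $i_1 < i_2$, and each write increments exactly one of them by one level, so the number of writes equals the number of level-transitions used and no erasure is possible in Phase~A. The merge write, which fills one of two adjacent active cells and simultaneously adjusts the surviving cell's level by some amount $m \ge 0$ so that its new residue modulo $4$ encodes the current pair of bits, then initiates Phase~B. Each subsequent write in Phase~B increases the surviving cell's level by $1$, $2$, or $3$ depending on the current residue and on which bit is flipped. The only possible erasure points are the merge write (if its adjustment would overflow) and writes inside Phase~B.

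The key technical ingredient is a cost identity for Phase~B. I would view its writes as walks on a four-vertex directed graph whose vertices are the residues $\{0,1,2,3\}$ and whose edges are labeled by the level-increase cost in $\{1,2,3\}$ assigned by $\cE_{2B}$. A direct inspection of the eight edges shows that for each pair of reversed edges the two weights sum to $4$, so every closed walk has total cost equal to twice its length. It follows that any Phase~B walk from residue $r_0$ to residue $r_f$ with $W_B$ writes and total cost $T_B$ satisfies
\begin{equation*}
T_B \;=\; 2\,W_B \,+\, \phi(r_0,r_f), \qquad \phi(r_0,r_f) \,=\, (r_0 \bmod 2) - (r_f \bmod 2).
\end{equation*}

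With this identity in hand, I would enumerate the erasure scenarios and count the writes in each. A short case analysis of the merge formula shows that overflow requires the surviving cell to have pre-merge level $x = q{-}2$ with $x \equiv 3 \pmod 4$ and the merge to be triggered by bit~$1$ (which also forces $q \ge 5$); in that scenario the number of successful writes is $(n{-}2)(q{-}1) + 2(q{-}2) \ge (n{-}1)(q{-}1) + (q{-}1)/2$. In every other scenario the process enters Phase~B at level $y_0 = x + m \le q{-}1$, and because $q$ is odd the earliest level below $q{-}1$ at which an adversary can force a Phase~B erasure is $y_f = q{-}2$ (odd, so $r_f \in \{1,3\}$, where bit~$1$ needs level-increase $3$ but only one level remains). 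Plugging $y_f = q{-}2$ into the identity and performing a short case analysis on $x \bmod 4$ and on which bit triggered the merge (including the corner case $y_0 = q{-}1$, in which $W_B = 0$) shows that in every branch the total write count $W = (n{-}1)(q{-}1) + x + W_B$ is at least $(n{-}1)(q{-}1) + (q{-}1)/2$; this ultimately reduces to the inequality $m \le x$, which always holds because $m \le x \bmod 4 \le x$. The principal technical obstacle is establishing the Phase~B cost identity, after which the merge-overflow and Phase~B-erasure computations reduce to straightforward (if tedious) residue bookkeeping.
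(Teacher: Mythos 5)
Your Phase A / merge / Phase B decomposition is the same as the paper's (which splits at the moment only one non-full cell remains), and your cost identity is a cleaner packaging of the paper's key estimate that ``after $m$ writes the cell increases by at most $2m$ levels'' if $v_1=0$ and at most $2m+1$ if $v_1=1$. Writing it as a potential function $\psi(r)=r\bmod 2$ with edge cost $c(u,v)=2+\psi(u)-\psi(v)$ genuinely tidies the bookkeeping: the paper then does a four-way case split on $k\bmod 4$, while you reduce to one formula plus an overflow check. So: same core idea, nicer abstraction.

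There are, however, three slips worth flagging. First, ``reversed edges sum to~$4$'' by itself does not force every closed walk to cost twice its length; an antisymmetric edge function can still have nonzero circulation around a cycle. You need to either exhibit the potential ($\psi(r)=r\bmod 2$, which one verifies on all eight edges) or also check the fundamental $4$-cycle $0\to1\to3\to2\to0$ (cost $1+2+3+2=8$). Second, the overflow case analysis is incomplete: with $x=q-2$ and $x\equiv 3\pmod 4$ (so $q\equiv 1\pmod 4$), the merge overflows for \emph{either} triggering bit --- bit~$1$ gives $m=3$ and bit~$2$ gives $m=2$, and both push the surviving cell to $q$ or $q+1$. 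The write count $(n{-}1)(q{-}1)+q-3$ is the same in both sub-cases, so the bound survives, but your claim that bit~$1$ is required is false. Third, in the non-overflow branch the condition that falls out of the identity is $x\ge m+\psi(r_0)$ where $\psi(r_0)=(x+m)\bmod 2$, which is strictly stronger than the $m\le x$ you state; your justification ``$m\le x\bmod 4\le x$'' only verifies the weaker one. The stronger inequality does hold (it fails only when $x<4$ with $x+m$ odd, and a quick enumeration of $x\in\{1,3\}$, $j\in\{1,2\}$ shows it still passes), but as written the step is not established. None of these gaps change the theorem; they are corrections to the argument, not to the result.
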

\begin{proof}
As long as there is more than one cell of level less than $q-1$, the weight of the cell-state vector increases by one on each write. This may change only after at least $(n-1)(q-1)$ writes. Assume that there is only one cell of level less than $q-1$ after $s=(n-1)(q-1)+k$ writes, where $k\geq 0$, and call it the $i$-th cell. Starting this write, the different residues modulo $4$ of the $i$-th cell correspond to the four possible two-bit information vector $(v_1,v_2)$. Therefore, on the $s$-th write, we also need to increase the level of the $i$-th cell so it will correspond to the correct information vector on this write. For all succeeding writes, if the second bit changes then the $i$-th cell increases by two levels. If the first bit changes from $0$ to $1$ then the $i$-th cell increases by one level and otherwise by three levels. Therefore, if there are $m$ more writes and $v_1=0$ then the $i$-th cell increases by at most $2m$ levels, and if there are $m$ more writes and $v_1=1$ then the $i$-th cell increases by at most $2m+1$ levels.

Let us consider all possible values of $k$ and the information vector $(v_1,v_2)$ on the $s$-th write in order to calculate the number of guaranteed writes before erasing. Note that on the $s$-th write $(v_1+v_2)\equiv s(\bmod~2)$. Furthermore, since $q$ is odd, the value of the bit that is written changes from one to zero because it reaches level $q-1$, and thus the other bit has value $k(\bmod~2)$.
\begin{enumerate}
    \item Assume $k(\bmod~4) = 0$, then $(v_1,v_2)=(0,0)$ and the level of the $i$-th cell does not increase on the $s$-th write. Since $v_1=0$, after $m$ writes the cell increases by at most $2m$ levels. Hence, there are at least $\frac{q-1-k}{2}$ more writes and the total number of writes is at least \vspace{-2ex}

        \begin{small}$$(n-1)(q-1) + k +\frac{q-1-k}{2} \geq (n-1)(q-1) + \frac{q-1}{2}.$$\end{small}
    \item Assume $k(\bmod~4) = 1$, then $(v_1,v_2)=(1,0)$ or $(v_1,v_2)=(0,1)$. If $(v_1,v_2) = (1,0)$ then on the $s$-th write the $i$-th cell does not increase its level and after $m$ writes its level increases by at most $2m+1$ levels. If $(v_1,v_2) = (0,1)$ then the $i$-th cell increases by one level and after $m$ writes its level increases by at most $2m$ more levels. Hence, in both cases there are at least $\frac{q-2-k}{2}$ more writes. Together we get that the total number of writes is at least \vspace{-4.5ex}

        \begin{small}$$(n-1)(q-1) + k +\frac{q-2-k}{2} \geq (n-1)(q-1) + \frac{q-1}{2}.$$\end{small}
    \item Assume $k(\bmod~4) = 2$, then $(v_1,v_2)=(0,0)$ and the $i$-th cell increases by two levels on $s$-th write. Since $v_1=0$, after $m$ more writes the cell increases by at most $2m$ levels and hence there are at least $\left\lfloor(q-1-(k+2))/2\right\rfloor$ more writes, where $k\geq 2$. Therefore, the total number of write is at least \vspace{-2ex}

        \begin{small}$$(n-1)(q-1) + k + \frac{q-3-k}{2}  \geq (n-1)(q-1) + \frac{q-1}{2}.$$\end{small}
    \item Assume $k(\bmod~4) = 3$, then $(v_1,v_2)=(1,0)$ or $(v_1,v_2)=(0,1)$. If $(v_1,v_2) = (1,0)$ then on the $s$-th write the $i$-th cell increases by two levels and after $m$ more writes it increases by at most $2m+1$ levels. If $(v_1,v_2) = (0,1)$ then the $i$-th cell increases by three levels and after $m$ more writes it increases by at most $2m$ more levels. Hence there are at least $\frac{q-4-k}{2}$ more writes, where $k\geq 3$. Thus, the total number of writes is at least \vspace{-2ex}

        \begin{small}$$(n-1)(q-1) + k + \frac{q-4-k}{2}  \geq (n-1)(q-1) + \frac{q-1}{2}.$$\end{small}
\end{enumerate}
In any case, the guaranteed number of writes is
$(n-1)(q-1)+\left\lfloor\frac{q-1}{2}\right\rfloor$.
\end{proof}

For even values of $q$, the construction is very similar. As long as there is more than one cell of level less $q-1$ we follow the same rules for the encoding. For the decoding, since $q-1$ is no longer even, the value of $v_1$ is the parity of the cells $1,\ldots,i_1$, where $i_1$ is the leftmost cell of value less $q-1$. The value of $v_2$ is the parity of the cells $i_2,i_2+1,\ldots,n$, where $i_2$ is the rightmost cell of value less $q-1$. If there is only one cell left, then it represents a value of two bits as before according to its residue modulo $4$. If the the index of the last available cell is $i$ then
\begin{align*}
& v_1 = (i-1 + y_i) (\bmod 2), & \\
& v_2 =((n-i) + \left\lfloor (y_i(\bmod 4))/2 \right\rfloor)(\bmod 2). &
\end{align*}
Also, the last cell does not reach level $q-1$ so it is always possible to distinguish what the last cell is. We omit the tedious details as the proof is similar to the case where $q$ is odd.

\section{Index-less Indexed Flash Codes}\label{sec3}
Our point of departure is the family of so-called \emph{indexed flash codes}, due to Jiang and Bruck~\cite{JB08},
that were briefly described in Section\,\ref{sec:introduction}.
In this section, we eliminate the need for index cells
--- and, thus, the overhead associated with these cells --- in the
Jiang-Bruck construction~\cite{JB08}.
This is achieved by ``encoding'' the indices
into the {order} in which the cell levels are increased.

As in~\cite{JB08}, 
we partition the $n$ memory
cells into $m$ groups of $n'$ cells each.
However, while in
\cite{JB08} the value of $n'$ is more
or less arbitrary, in our construction $n'\! = k$.
We henceforth refer to such
groups of $n'\! = k$ cells as {\dfn blocks}
(though they are not related 
to the \emph{physical blocks} of floating-gate
cells which comprise the flash memory).
We will furthermore use, throughout this paper, the following
terminology. We say that:
\begin{itemize}
\item[\footnotesize$\blacktriangleright\!\!$]
a block is \emph{\dfn full} if all its cells are at level $q{-}1$;

\item[\footnotesize$\blacktriangleright\!\!$]
a block is \emph{\dfn empty} if all its cells are at level zero;

\item[\footnotesize$\blacktriangleright\!\!$]
a block is \emph{\dfn active} if it is neither full nor empty;

\item[\footnotesize$\blacktriangleright\!\!$]
a block is \emph{\dfn live} if it is not full (either active or empty).
\end{itemize}
In our construction, each block 
represents
\emph{exactly one bit}. This implies that the total number
of blocks, given by $m = \trunc{n/k}$, must be at least $k$,
which in turn implies $n \ge k^2$. If $n$ is not
divisible by $k$, the remaining
cells
are simply left unused. Finally, we
also assume that either $k$ is even or $q$ is odd. If this
is not the case, we can invoke the same construction with
$k$ replaced by $k\,{+}\,1$ (and the last 
bit permanently set to zero).

\looseness=-1
The key idea is that each block is used to encode not only
the current value of the bit that it represents, but also
\emph{which} of the $k$ bits it represents. The value of
the bit is simply the parity of the block. The index of
the bit is encoded in the \emph{order} in which the levels
of the $k$ cells are increased. For example, if the block
stores the $i$-th bit, first the level of the $i$-th cell
in the block is increased from $0$ to $q{-}1$ in response
to the transitions $0 \,{\to}\, 1$ and $1 \,{\to}\, 0$ in
the bit value. Then, the same procedure is applied to the
$(i{+}1)$-st cell, the $(i{+}2)$-nd cell, and so on,
with the indices $i\,{+}\,1,i\,{+}\,2,\dots$ interpreted cyclically
(modulo~$k$). This process is illustrated in the following
example.
\begin{example}
Suppose that $k = 4$ and $q = 3$. If a block represents
the first bit, then its cell levels will transition from
$(0,0,0,0)$ to $(2,2,2,2)$ in the following order:
$$
(0000) \to (1000) \to (2000) \to (2100) \to (2200)
$$
$$
\to (2210) \to (2220) \to (2221) \to (2222)
$$
On the other hand, for a block that represents the second bit,
the corresponding cell-writing order is given by:\vspace{-0.75ex}
$$
(0000) \to (0100) \to (0200) \to (0210) \to (0220)
$$
$$
\to (0221) \to (0222) \to (1222) \to (2222)
$$
The cell-writing orders for blocks that represent
the third and fourth bits are given, respectively, by\vspace{-0.75ex}
$$
(0000) \to (0010) \to (0020) \to (0021) \to (0022)
$$
$$
\to (1022) \to (2022) \to (2122) \to (2222)
$$
and
$$
(0000) \to (0001) \to (0002) \to (1002) \to (2002)
$$
$$
\to (2102) \to (2202) \to (2212) \to (2222)
$$
Note that, unless a block is full, it is always possible
to determine which cell was written first and, consequently,
which of the $k = 4$ bits this block represents.
\end{example}

We now provide a precise specification of an $(n,k)_q$
flash code $\C$ based upon this idea, in terms of a decoding
map $\cD_0$ and an encoding map $\cE_0$. 

\noindent
\textbf{Decoding map $\cD_0$\kern1pt:}
The input to this map is a cell-state vector
$\bfx = (\bfx_1|\bfx_2|\cdots|\bfx_m)$, partitioned
into $m$ blocks.
The output is the corresponding
information vector $(v_0,v_1,\dots,v_{k-1})$.
\codemode{%
~$(v_0,v_1,\dots,v_{k-1})$\;=\;$(0,0,\dots,0)$;\\[0.5ex]
~for\;($j$\:=\:$1$;\;$j \le m$;\;$j$\:=\:$j\,{+}\,1$)\\[-0.25ex]
~if\;(active($\bfx_j$))\\[-0.25ex]
~\Copen\;\,$\!i$\;=\;\,read\_index($\bfx_j$);\!
        $v_i$\:=\;\,parity($\bfx_j$);\,\Cclose
}

\noindent
\textbf{Encoding map $\cE_{0}$\kern1pt:}
The input to this map is a cell-state vector
$\bfx = (\bfx_1|\bfx_2|\cdots|\bfx_m)$, partitioned
into $m$ blocks of $k$ cells, and an index $i$ of
the bit that has changed. Its output is either a cell-state
vector $\bfy = (\bfy_1|\bfy_2|\cdots|\bfy_m)$
or the erasure symbol $\E$.\hspace*{3ex}
\codemode{%
~$(\bfy_1|\bfy_2|\cdots|\bfy_m)$\;=\;$(\bfx_1|\bfx_2|\cdots|\bfx_m)$;\\[0.50ex]
~for\;($j$\:=\:$1$;\;$j \le m$;\;$j$\:=\:$j\,{+}\,1$)\\[-0.25ex]
~if\;(active($\bfx_j$)\;$\land$\;(read\_index($\bfx_j$){}\,==\,\;$i$))\\[-0.25ex]
~\Copen\;write($\bfy_j$); break;\,\Cclose\\[-0.25ex]
~if\;($j$\;==\;$m+1$)    /\!/\;\textrm{\sl active block not found}\\[-0.25ex]
~for\;($j$\:=\:$1$;\;$j \le m$;\;$j$\:=\:$j\,{+}\,1$)\\[-0.25ex]
~if\;(empty($\bfx_j$))\;\Copen\:write\_new($i$,$\bfy_j$);\;break;\Cclose~\\[-0.25ex]
~if\;($j$\;==\;$m+1$)    /\!/\;\textrm{\sl no empty blocks remain}\\[-0.25ex]
~return\;$\E$;
}

To complete the specification of the flash code $\C(\cD_0,\cE_0)$,
let us elaborate upon all the functions used in the pseudo-code above.
The function {\code active($\bfx$)}, respectively {\code empty($\bfx$)},
simply determines whether the given block 
is active, respectiv\-ely empty. The function {\code parity($\bfx$)}\,computes
the parity of $\bfx$,
defined in Section\,\ref{sec:preliminaries}. Note that the parity
of a full block is always zero (since $k(q{-}1)$ is even, by assumption).
The function {\code read\_index($\bfx$)}\,computes the bit-index
encoded in an active block $\bfx = (x_0,x_1,\dots,x_{k-1})$. This
can be done as follows. Find all the zero cells in $\bfx$. Note
that these cells always 
form one cyclically contiguous run, say
$x_j,x_{j+1},\dots,x_{j+r}$ (where the indices are modulo $k$).
Then the index of the corresponding bit is $i = j+r+1~(\bmod k)$.
If there are no zeros in $\bfx$, there must be exactly one
cell, say $x_j$, whose level is strictly less than $q{-}1$. In
this case, the 
bit-index is $i = j+1~(\bmod k)$.
The function {\code write($\bfy$)}\,proceeds along similar
lines. Find the single cyclically contiguous run of zeros in
$(y_0,y_1,\dots,y_{k-1})$, say $y_j,y_{j+1},\dots,y_{j+r}$.
If $y_{j-1} \,{<}\, q{-}1$, increase $y_{j-1}$ by one; otherwise set
$y_j = 1$. If there are no zeros in $\bfy$, find the unique
cell $y_j$ such that $y_j \,{<}\, q{-}1$ and increase its level
by one. Finally, the function {\code write\_new($i$,$\bfy$)}\,simply
sets $y_i = 1$.
\begin{theorem}
\label{aux}
The write deficiency of the flash code\/ $\C(\cD_0,\cE_0)$
described above
is at most\, 
\be{th-aux}
(k\,{-}\,1)\Bigl((k\,{+}\,1)(q{-}1) \,-\, 1\Bigr) \ = \ O\bigl(qk^2\bigr)
\ee
\end{theorem}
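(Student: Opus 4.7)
The strategy is to lower-bound the number of successful writes $t$ by analyzing the cell-state vector at the moment the encoder returns $\E$. Every successful call to $\cE_0$ ultimately invokes either \texttt{write} or \texttt{write\_new}, each of which increases the weight of exactly one cell by exactly one. Hence the total weight $\wt(\bfx)$ of the state equals $t$, and I want to lower-bound this weight at the failure state.

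The key structural claim is an invariant on blocks: once a block leaves the empty state via \texttt{write\_new}$(i,\cdot)$, it represents bit $i$ for the rest of its life, and at any time at most one active block represents each bit. I would verify this by tracing the action of \texttt{write}. After \texttt{write\_new} the block has a single $1$ in position $i$, so the cyclic run of zeros runs from $i{+}1$ through $i{-}1$ and \texttt{read\_index} returns $i$. The \texttt{write} rule then either increments the cell immediately preceding the zero run or, if that cell is saturated, starts a new cell at the beginning of the run; in either case the right endpoint of the (shortened) zero run remains at $i{-}1\pmod k$, so \texttt{read\_index} continues to return $i$. This persists through the degenerate phase in which only one cell is below level $q{-}1$. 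For the second part of the invariant, observe that $\cE_0$ calls \texttt{write\_new}$(i,\cdot)$ only after the first loop failed to find an active block with index $i$, so at most one active block per bit ever coexists. Together these give at most $k$ active blocks at any instant.

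With the invariant in hand, I examine the state at failure. The encoder returns $\E$ only when (a) no active block has index $i$, and (b) no empty block remains. By the invariant, (a) forces the number of active blocks to be at most $k-1$, and (b) forces the remaining blocks to all be full. Writing $a$ for the number of active blocks at failure, we therefore have
\[
t \;=\; \wt(\bfx) \;\geq\; (m-a)\,k(q-1) \,+\, a
\]
since each full block contributes $k(q-1)$ and each active block contributes at least $1$. The right-hand side is decreasing in $a$ (because $k(q-1)\geq 2$), so its minimum on $0\leq a \leq k-1$ occurs at $a=k-1$, giving $t \geq (m-k+1)k(q-1) + (k-1)$.

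Finally I account for unused cells. Writing $n = mk+r$ with $0 \leq r \leq k-1$, the total available transitions are $n(q-1) = mk(q-1) + r(q-1)$, so
\[
\delta(\C) \;\leq\; (k-1)k(q-1) + r(q-1) - (k-1) \;\leq\; (k-1)\bigl((k+1)(q-1)-1\bigr),
\]
which is \eq{th-aux}. I expect the main obstacle to be the careful bookkeeping for the invariant in the previous paragraph: one must verify, through a small case analysis, that \texttt{read\_index} and \texttt{write} genuinely cooperate to preserve the bit-index throughout both the ``multiple zeros'' phase and the terminal ``single sub-$(q{-}1)$ cell'' phase of a block's life. Once that is settled, the counting argument is immediate.
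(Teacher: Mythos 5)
Your proof is correct and follows essentially the same approach as the paper: bound the unused cell-level transitions at the failure state by observing that there are at most $k-1$ active blocks (each potentially using just one level) plus at most $k-1$ leftover cells from the partition $m = \lfloor n/k\rfloor$. The main difference is that you fill in the invariant that the paper states without proof (each active block represents a fixed bit throughout its lifetime, and distinct active blocks represent distinct bits), and you phrase the count in terms of the weight of the failure state rather than ``wasted levels,'' but these are cosmetic: the decomposition, the key bound $a \leq k-1$, and the final arithmetic are the paper's.
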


\Proof 
Note that at each instance, at most $k$ of the $m$ blocks are active.
The encoding map $\cE_0(i,\bfx)$ produces the
symbol $\E$ when there are no more empty blocks, and none of the
active blocks represents the $i$-th bit. In the worst case,
this may occur when there are $k-1$ active blocks, each using just
one cell level. This contributes $(k\,{-}\,1)\bigl(k(q{-}1) - 1\bigr)$
unused cell levels.
In addition, there are at most $k-1$ cells that are unused due
to the partition into $m = \trunc{n/k}$ blocks of exactly $k$ cells.
These contribute at most $(k-1)(q{-}1)$ unused cell levels.
\qed\pagebreak[3.99]

\section{Nearly Optimal Construction}\label{sec4}
It is apparent from the proof of \Tref{aux} that the deficiency
of the flash code $\C(\cD_0,\cE_0)$, constructed in Section\,\ref{sec3},
is due primarily to the following: when writing stops, there may
remain potentially large amount of unused cell levels.
The key idea developed in this section is to
\emph{continue writing} after the encoding map $\cE_0$
produces the erasure symbol $\E$, 
utilizing those cell levels that are left unused by $\cE_0$.
Obviously, it is \emph{not} possible to continue writing
using the same encoding and decoding maps.
However, it may
be possible to do so if, at the point when $\cE_0$
produces the erasure symbol $\E$, we {switch} to
a \emph{different encoding procedure}, say $\cE_1$.
In fact, this idea can be applied iteratively:
once $\cE_1$ reaches its limit, we will transition
to another encoding map $\cE_2$, then yet another map $\cE_3$,
and so on.

Assuming that $k \equiv 0\! \pmod 4$,
here is one way to continue writing after the encoding map
$\cE_0$ has been exhausted.
When $\cE_0$ produces the erasure symbol $\E$,
we say that the \emph{first stage} of encoding is over 
and transition to the \emph{second stage}, as follows.
First, we re-examine the cell-state vector
$\bfx = (\bfx_1|\bfx_2|\cdots|\bfx_m)$
and re-partition it into $2m = 2\trunc{n/k}$ blocks
of {$k/2$ cells} each. Most of these smaller blocks
will already be full, but we may find some $m_1$ of
them that are either empty or active (live). Observe that
$m_1 \,{\le}\,\, 2(k\,{-}\,1)$ since at the end of the first
stage, there are at most $k-1$ active blocks of $k$
cells, and each of them produces at most two live (non-full)
blocks of $k/2$ cells.

If $m_1\! \ge k$, we can
continue writing as follows. Once again, each of
the $m_1$ blocks will represent exactly one bit;
as before, the value of this bit is determined
by the parity of the block. As part of the transition
from the first stage to the second stage, we record
the current information vector $(v_0,v_1,\dots,v_{k-1})$
in the first $k$ of the $m_1$ live blocks, say
$\bfx_1,\bfx_2,\dots,\bfx_{k}$. To this end,
whenever {\code parity($\bfx_i$)}\,$\ne v_{i-1}$,
we increase the level of one of the cells in $\bfx_i$ by one;
otherwise, we leave $\bfx_i$ as is.

Since the blocks now have $k/2$ cells rather than $k$ cells,
it is no longer possible to encode in each block
\emph{which} of the $k$ information bits it represents.
Therefore, we set aside for this purpose
\smash{$2(k{-}1)\ceil{\log_q(k{+}2)}$} \emph{index cells}
(that are not used during the first stage).
These cells are 
partitioned into $2(k{-}1)$ blocks of \smash{$\mu = \ceil{\log_q(k{+}2)}$}
cells each, which we call {\dfn index blocks}.
Henceforth, it will be convenient to refer to
the blocks of $k/2$ cells as {\dfn parity blocks},
in order to distinguish them from 
the index blocks. Initially, the first $k$ index blocks
\smash{$\bfu_1,\bfu_2,\dots,\bfu_{k}$}
are set so that \smash{$\bfu_i = {i}$} (in the base-$q$ number
system), which reflects the fact that the information bits
$v_0,v_1,\dots,v_{k-1}$ are stored (in that order)
in the first $k$ live parity blocks.
The next $m_1 \,{-}\, k$ index blocks are
set to $(0,0,\dots,0)$, thereby indicating that the
corresponding (live) parity blocks are available to store information
bits. The last $2(k{-}1) - m_1$ index blocks are set to
$(q{-}1,q{-}1,\dots,q{-}1)$ to indicate that the
corresponding parity blocks are full (in fact, nonexistent).
Finally, it is possible that in the process of
enforcing {\code parity($\bfx_i$)}\,$= v_{i-1}$ for the
first $k$ live parity blocks, some of these blocks become
full (this happens iff
$\wt(\bfx_i) = (k/2)(q{-}1) - 1$
and $v_i = 0$ at the end of the first stage,
since $k/2$ is even by assumption).
To account for this fact, we set the corresponding index
blocks to $(q{-}1,q{-}1,\dots,q{-}1)$. This completes
the transition from the first stage to the second stage,
which is invoked when the encoding map $\cE_0$
produces the erasure symbol $\E$.

Let us now summarize the foregoing discussion by giving
a concise algorithmic description of the transition
procedure.\\

\noindent
\textbf{Transition procedure $\cT_1$\kern1pt:}
Partition the memory into $2\trunc{n/k}$ parity
blocks of $k/2$ cells, and identify the $m_1 \le 2(k{-}1)$ parity
blocks $\bfx_1,\bfx_2,\ldots,\bfx_{m_1}$
that are not full. If $m_1 < k$, output the erasure symbol $\E$
and terminate. Otherwise, set the $2(k{-}1)$ {index} blocks
$\bfu_1,\bfu_2,\ldots,\bfu_{2k-2}$ as follows:
\be{set-index}
\bfu_i
\, = \:
\left\{\hspace{-0.5ex}
\begin{array}{lc@{\hspace{1ex}}l@{}}
  i & & \text{for $i = 1,2,\ldots,k$} \\[0.15ex]
  0 & & \text{for $i = k\,{+}\,1,k\,{+}\,2,\ldots,m_1$}\\[0.15ex]
q^\mu-1 & & \text{for $i = m_1{+}1,m_1{+}2,\ldots,2k\,{-}\,2$}
\end{array}
\right.\vspace*{0.75ex}
\ee
where \smash{$\mu \:{=}\, \ceil{\log_q(k{+}2)}$} is the number of
cells in each index block, \,then record the 
information vector $(v_0,v_1,\dots,v_{k-1})$ in
the first $k$ live parity blocks $\bfx_1,\bfx_2,\ldots,\bfx_{k}$,
as follows:\hspace*{2ex}
\codemode{%
~for\;($i$\:=\:$1$;\;$i \le k$;\;$i$\:=\:$i\,{+}\,1$)\\[-0.25ex]
~if\;(parity($\bfx_i$)$\ne v_{i-1}$)\\[-0.10ex]
~\Copen\;increment($\bfx_i$);
         if\;(full($\bfx_i$)\!)\:$\bfu_{i}$\:=\;$q^\mu-1$;\Cclose~
}
\noindent
The function {\code full($\bfx$)} determines
whether the given block $\bfx$ (which could be a parity
block or an index block) is full. The function
{\code increment($\bfx$)} increases by one the level of a cell
(does not matter which) in the given live block.\vspace{.250ex}

During second-stage encoding and decoding, we will need to figure
out for each active parity block $\bfx$ which of the $k$ information bits
it represents. To this end, we will have to find and read the index block
$\bfu$ that \emph{corresponds} to $\bfx$. How exactly is the correspondence
between parity blocks and index blocks established?
Note that, upon the completion of the transition procedure $\cT_1$,
there is the same number of {live} parity blocks and {live}
index blocks; moreover, the $j$-th \emph{live} index block corresponds
to the $j$-th \emph{live} parity block, for all $j$. The encoding
procedure will make sure that this correspondence is preserved
throughout the second stage: 
whenever a parity block becomes full, it
will make the corresponding index block
full as well.

\hspace*{-.5ex}We are now ready to present the encoding and
decoding \mbox{maps\hspace*{-.5ex}}
which are, again, specified in
C-like pseudo-code notation.\vspace{0.5ex}

\noindent
\textbf{Decoding map $\cD_1$\kern1pt:}
The input to this map is a cell-state vector
$\bfx = (\bfx_1|\bfx_2|\cdots|\bfx_{2m}|\!|\,\bfu_1|\bfu_2|\cdots|\bfu_{2k-2})$,
partitioned into $2m$ parity blocks, of $k/2$ cells each,
and $2(k{-}1)$ index blocks. The output is the 
information vector $(v_0,v_1,\dots,v_{k-1})$.
\codemode{%
~$(v_0,v_1,\dots,v_{k-1})$\;=\;$(0,0,\dots,0)$;\\[0.5ex]
~for\;($\ell$\:=\:$j$\:=\:$1$;\;$j \le 2m$;\;$j$\:=\:$j\,{+}\,1$)\\[-0.75ex]
~\Copen\\[-1.00ex]
~~~if\;(full($\bfx_j$)\!)\;continue;
         /\!/\;\textrm{\sl \!skip full blocks}\\[-0.55ex]
~~~while\;(full($\bfu_\ell$)\!)\;$\ell$\:=\:$\ell\,{+}\,1$;
         \hspace*{-1ex}/\!/\;\textrm{\sl \!skip full blocks~~~}\\[-0.25ex]
~~~$\,i$\;=\;$\bfu_\ell$; $\ell$\:=\:$\ell\,{+}\,1$;\\[-0.25ex]
~~~if\;($i \ne 0$)\;$v_{i-1}$\:=\;\,parity($\bfx_j$);\\[-1.5ex]
~\Cclose
}\vspace{.5ex}

\noindent
Given an index $i$ of the bit that has changed, the
encoding map $\cE_1$ first tries to find an active parity
block $\bfx$ that represents the $i$-th information bit.
If such a block is found, it is incremented and checked to see
if it is full (in which case the corresponding index block
is set to $q^\mu-1$). If not, another live parity block
is allocated to represent the $i$-th information bit.
If no more live parity blocks are available,

the erasure symbol $\E$
is returned.\vspace{0.5ex}

\noindent\looseness=-1
\textbf{Encoding map $\cE_1$\kern1pt:}
The input to this map is a cell-state vector
$\bfx = (\bfx_1|\bfx_2|\cdots|\bfx_{2m}|\!|\,\bfu_1|\bfu_2|\cdots|\bfu_{2k-2})$,
partitioned into $2m$ parity blocks 
and $2(k{-}1)$ index blocks, and an index $i$ of
the information bit that changed.\
Its output is either a cell-state vector 
$\bfy= (\bfy_1|\bfy_2|\cdots|\bfy_{2m}|\!|\,\bfu'_1|\bfu'_2|\cdots|\bfu'_{2k-2})$
or the symbol~$\E$.\hspace*{.25ex}
\codemode{%
~$(\bfy_1|\bfy_2|\cdots|\bfy_{2m})$\;=\;$(\bfx_1|\bfx_2|\cdots|\bfx_{2m})$;\\
~$(\bfu'_1|\bfu'_2|\cdots|\bfu'_{2k-2})$\;=\;$(\bfu_1|\bfu_2|\cdots|\bfu_{2k-2})$;
\\[0.50ex]
~for\;($\ell$\:=\:$j$\:=\:$1$;\;$j \le 2m$;\;$j$\:=\:$j\,{+}\,1$)\\[-0.75ex]
~\Copen\\[-1.00ex]
~~~if\:(full($\bfx_j$)\!)\;continue;\\[-0.65ex]
~~~while\;(full($\bfu_\ell$)\!)\;$\ell$\:=\:$\ell\,{+}\,1$;\\[-0.05ex]
~~~if\:($\bfu_\ell$\:==\;$i+1$)\\[-0.7ex]
~~~\Copen\\[-1.00ex]
~~~~~increment($\bfy_j$);\\[-.70ex]
~~~~~if\kern0.5pt(full($\bfy_j$)\!)\:$\bfu'_{\ell}$\:=\;$q^\mu-1$;\\[-.85ex]
~~~~~break;\\[-.75ex]
~~~\Cclose\\[-.75ex]
~~~else $\ell$\:=\:$\ell\,{+}\,1$;\\[-0.50ex]
~\Cclose\\[0.50ex]
~if\;($j$\;==\;$2m+1$)   /\!/\;\textrm{\sl active block not found}  \\[-0.50ex]
~for\;($\ell$\:=\:$j$\:=\:$1$;\;$j \le 2m$;\;$j$\:=\:$j\,{+}\,1$)\\[-0.75ex]
~\Copen\\[-1.00ex]
~~~if\:(full($\bfx_j$)\!)\;continue; \\[-0.65ex]
~~~while\;(full($\bfu_\ell$)\!)\;$\ell$\:=\:$\ell\,{+}\,1$;\\[-0.05ex]
~~~if\:($\bfu_\ell$\:==\;$0$)\\[-0.7ex]
~~~\Copen\\[-1.25ex]
~~~~~$\bfu'_\ell$\:=\;$i+1$;\\[-0.5ex]
~~~~~if\:(parity($\bfx_j$)$\ne v_{i}$)\,increment($\bfy_j$);\\[-0.78ex]
~~~~~if\:(full($\bfy_j$)\!)\:$\bfu'_{\ell}$\:=\;$q^\mu-1$;\\[-.82ex]
~~~~~break;\\[-.75ex]
~~~\Cclose\\[-.75ex]
~~~else $\ell$\:=\:$\ell\,{+}\,1$;\\[-0.50ex]
~\Cclose\\[0.50ex]
~if\;($j$\;==\;$2m+1$)
/\!/\;\textrm{\sl no more available live blocks~~}\\[-0.5ex]
~return\;$\E$;}\vspace{1.00ex}

Note that when the second encoding stage terminates, there are
at most $k-1$ parity blocks that are not full, comprising~at
most $k(k-1)/2$ cells (at most $k(k-1)(q{-}1)/2$ cell-levels).\vspace{.50ex}

Once the maps $\cD_1$ and $\cE_1$ are understood, it becomes
clear that the same approach can be applied iteratively.
The resulting flash code $\C^*$ will proceed, sequentially,
through $s$ different~encoding stages $\cE_0,\cE_1,\dots,\cE_{s-1}$,
where $s = \ceil{\log_2k}$. In describing this code, we shall
assume for the sake of simplicity that $k$ is a power of two,
that is $k = 2^s$. If not, the same code can be used to store
$2^s > k$ information bits,~of~which~the last $2^s - k$ are set
to zero. Note that this will not change the order of the resulting
write deficiency.

\hspace*{-2pt}To accommodate
the encoding maps $\cE_1,\cE_2,\dots,\cE_{s-1}$,~we~set
aside for \emph{each map} a batch of $2(k-1)$ index blocks,
with~each index block consisting of
\smash{$\mu = \ceil{\log_q(k{+}2)}$} cells.
The transition procedure $\cT_r$ which bridges between
the encoding maps $\cE_{r-1}$ and $\cE_r$
(for some $r \,{\in}\, \{2,3,\dots,s{-}1\}$)
is identical
to the transition procedure $\cT_1$, except for the
following differences:
\begin{description}
\item[~~~{\bf D1.}]
The $r$-th batch of index blocks is used; and

\item[~~~{\bf D2.}]
The parity blocks consist of $k/2^r$ cells each.
\end{description}
In addition to {\bf D1} and {\bf D2}, the decoding/encoding
maps~$\cD_r$~and $\cE_r$ differ from $\cD_1$ and $\cE_1$ in
that ``$2m$'' should be replaced by ``$2^rm$'' throughout,
where $m$ stands for $\trunc{n/k}$ as before. There are
no other differences.\vspace{-1.25ex}
\begin{theorem}
\label{th2}
For $s \,{=}\,\ceil{\log_2\!k}$, 
the write deficiency of the flash code\/ $\C^*\!$
defined by the sequence of decoding/encoding~maps
$\cD_0,\cD_1,\dots,\cD_{s-1}$
and\/ $\cE_o,\cE_1,\dots,\cE_{s-1}$
is\/
\smash{$O\bigl(qk\log^2 \!k/\!\log q\bigr)$}.\vspace*{-0.50ex}
\end{theorem}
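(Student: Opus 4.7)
The plan is to bound $\delta(\C^*) = n(q-1) - t$ by partitioning the wasted cell-level capacity into three contributions: (i) the capacity of the index cells reserved across all transitions, (ii) the parity-enforcement writes performed during the transitions themselves, and (iii) the levels missing from parity cells that are not full when encoding finally halts.

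For (i), I would observe that across the $s - 1 = O(\log k)$ transitions $\cT_1, \ldots, \cT_{s-1}$, each sets aside $2(k-1)$ index blocks of $\mu = \lceil \log_q(k+2) \rceil$ cells. The key point is that each index cell's full capacity of $q - 1$ levels counts against the deficiency, whether it ends up at level zero (unused) or at level $q - 1$ (used as setup overhead rather than for a bit write). Thus the total index-cell contribution is $(s-1) \cdot 2(k-1) \cdot \mu \cdot (q - 1)$, which simplifies to $O(qk \log^2 k / \log q)$ upon substituting $s = \lceil \log_2 k \rceil$ and $\mu = O(\log k / \log q)$. This is the dominant term in the target bound.

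For (ii), each transition $\cT_r$ performs at most $k$ parity-adjustment increments on parity cells (one per information bit being re-recorded in the new finer partition). Summed across all $s-1$ transitions, this contributes at most $O(k \log k)$ cell levels. For (iii), when the final encoder $\cE_{s-1}$ returns $\E$, I would invoke the argument of \Tref{aux} applied to the final stage $s - 1$, whose parity blocks have $k / 2^{s-1} = 2$ cells: by the same reasoning, there can be at most $k - 1$ active blocks of two cells each at termination, contributing at most $2(k - 1)(q - 1) = O(qk)$ missing levels. Summing (i)--(iii) yields a deficiency bound of $O(qk \log^2 k / \log q) + O(k \log k) + O(qk) = O(qk \log^2 k / \log q)$, as claimed.

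The main obstacle I anticipate is (iii) in the scenario where the algorithm instead halts during an intermediate transition $\cT_r$ because $m_r < k$, rather than at $\cE_{s-1}$. A naive bound on the missing cell levels of the live blocks in that case is $(k-1)(k/2^{r-1})(q-1) = \Theta(qk^2 / 2^r)$, which for small $r$ could dominate the target bound. The resolution is to argue that whenever $m_r$ falls below the threshold $k$, a large fraction of the parity cells in the preceding stage's active blocks must already be at level $q-1$ (otherwise re-partitioning would have produced at least $k$ live sub-blocks to continue with), so that the corresponding cell levels are already credited to $t$ rather than to the deficiency, and the residual missing-level contribution from this early-halt configuration collapses back to $O(qk)$.
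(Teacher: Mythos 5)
Your accounting (i)--(iii) coincides term-for-term with the paper's proof: (i) is the paper's index-block term \eqref{th2-eq1}, (ii) is the parity-enforcement term $k(s{-}1)$, and (iii) is the paper's ``at most $k{-}1$ not-full blocks of $2$ cells each.'' (The paper adds a fourth $O(qk)$ term for the up-to-$(k{-}1)$ cells discarded when partitioning $n$ cells into blocks of exactly $k$ cells, which you implicitly absorb.) So the core decomposition matches and the dominant term comes out the same.

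Where your proposal and the paper part ways is exactly the obstacle you flag at the end, and here the gap is real on \emph{both} sides: the paper's proof tacitly assumes writing terminates only when $\cE_{s-1}$ returns $\E$ and says nothing about a transition $\cT_r$ aborting with $m_r < k$, and your proposed patch does not survive scrutiny. You assert that $m_r<k$ forces the surviving active blocks to be ``mostly full,'' so the residual deficiency collapses to $O(qk)$; but $m_r < k$ is also triggered when there are simply \emph{few} active blocks, each of which can be almost empty. Concretely: write bit $0$ exactly $(m-k/2+1)\,k(q-1)+1$ times so that it fills $m-k/2+1$ blocks and leaves one active block of weight $1$; then write each of bits $1,\dots,k/2-2$ once, consuming every remaining empty block and leaving $k/2-1$ active blocks, all of weight $1$; finally write bit $k/2-1$. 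Now $\cE_0$ returns $\E$, and re-partitioning gives $m_1 = 2(k/2-1) = k-2 < k$, so $\cT_1$ aborts while the $k/2-1$ live blocks carry about $(k/2)\cdot k(q-1) = \Theta(qk^2)$ unused levels --- far exceeding both the ``$O(qk)$ residual'' you claim and, for large $k$, the stated bound $O(qk\log^2k/\log q)$. So the resolution as written is not a proof; the early-halt case needs either a structural lemma ruling out such configurations, or a modification of the construction so that writing does not simply stop when $\cT_r$ reports $m_r<k$, rather than the credit-shifting argument you sketch.
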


\Proof
We consider the worst-case scenario for the number of cell
levels that are either unused or ``wasted'' in the overall
encoding procedure.
As before, there are at most $k-1$ cells that are unused due 
to the partition into $\trunc{n/k}$ blocks,~of~ex-actly $k$ cells each,
at the very first encoding stage.
These cells contribute at most $(q{-}1)(k-1)$ unused cell levels.
The index blocks for the $s-1$ encoding maps $\cE_1,\cE_2,\dots,\cE_{s-1}$
contain $2(k-1)(s-1)\mu$ cells altogether, thereby wasting at
most\vspace{-0.15ex}
\be{th2-eq1}
2(q-1)(k-1)(s-1) \ceil{\log_q(k{+}2)}
\: = \:
O\biggl(\frac{qk\log^2 \!k}{\log q}\biggr)
\ee
cell levels. In each of the $s-1$ transition procedures,
the situ\-at\-ion
{\code parity($\bfx_i$)${\ne}\,v_{i-1}$} can occur
at most $k$ times, and each time it occurs a single
cell level is wasted. Finally, as in \Tref{aux},
when the encoding process $\cE_o,\cE_1,\dots,\cE_{s-1}$
terminates there are at
most $k-1$ parity blocks that are not full and,
in the worst case, each of them uses just one cell
level. However, 
now these parity blocks contain only $\lceil k/2^{s-1}\rceil = 2$ cells
each, and thus contribute at most $(k-1)(2q-3)$
unused cell levels. Putting all of this together, we find
that at most\vspace{-0.20ex}
\be{th2-eq2}
(q{-}1)(k{-}1)\Bigl(2(s{-}1)\ceil{\log_q(k{+}2)}\,+\, 3\Bigr)
\ + \ k(s{-}1)
\vspace*{-.50ex}
\ee
cell levels are wasted or left unused. Clearly,
this~expression~is dominated by
\Eq{th2-eq1}, and thus bounded by
\smash{$O\bigl(qk\log^2 \!k/\!\log q\bigr)$}.~\qed\vspace{1.00ex}

For large $q$, the upper bound of \smash{$O\bigl(qk\log^2 \!k/\!\log q\bigr)$}
on the de\-ficiency of our scheme can be improved by using
a~more~efficient ``packaging'' of 
index blocks in the flash memory.
As before, we allocate a batch of $2(k-1)$ index blocks~to~each
encoding stage except $\cE_0$. But now, every index
block~will~occupy
\smash{$\mu' = \ceil{\log_2(k{+}2)}$}
cells rather than
\smash{$\mu = \ceil{\log_q(k{+}2)}$}
cells, and the indices will be written in binary rather than
in the base-$q$ number system. This allows index blocks that
correspond to successive encoding stages to be ``stacked on top
of each other'' in the same memory cells. Specifically, the
encoding stage $\cE_1$ will use only cell levels $0$ and $1$
to record the indices in its index blocks. Once this stage is
over, the index information recorded during $\cT_1$ and $\cE_1$
is no longer relevant, and the level of \emph{all} the
$2(k-1)\mu'$ cells in the $2(k-1)$ index blocks can be
raised to $1$. Thereafter, provided $q \ge 3$, the transition
procedure $\cT_2$ and the encoding map $\cE_2$ can use cell
levels $1$ and $2$ to record the relevant index information
in the \emph{same memory cells}. Proceeding in this manner,
we can accommodate up to $q-1$ batches of index blocks in
$2(k-1)\mu'$~memory cells. We shall refer to this indexing
scheme as \emph{stacked binary indexing} and denote the
resulting flash code by $\C'$.\vspace{-1.10ex}
\begin{theorem}
\label{main}
The write deficiency of the flash code\/ $\C'$
defined by the sequence of decoding/encoding~maps
$\cD_0,\cD_1,\dots,\cD_{s-1}$
and\/ $\cE_o,\cE_1,\dots,\cE_{s-1}$ that use stacked
binary indexing is at most
$O(qk\log k)$ if $q \ge \log_2k$, and at most
\smash{$O(k\log^2 k)$} otherwise.
\end{theorem}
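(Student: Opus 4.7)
The plan is to follow exactly the accounting scheme used in the proof of \Tref{th2}, but to re-count the contribution of the index blocks under the stacked binary indexing scheme. As before, the deficiency decomposes into (i) cells left unused by the initial partition into blocks of $k$ cells, (ii) cell levels devoted to index blocks, (iii) cell levels wasted by the parity alignments {\code parity}$(\bfx_i)\neq v_{i-1}$ performed during the transitions $\cT_1,\cT_2,\ldots,\cT_{s-1}$, and (iv) cell levels left in the $\le k-1$ non-full parity blocks at the end of stage $\cE_{s-1}$. Contributions (i), (iii), (iv) are unaffected by how we package index blocks: they give $(q{-}1)(k{-}1)$, $\le k(s{-}1)$, and $\le (k{-}1)(2q{-}3)$ wasted/unused levels respectively, exactly as in \Tref{th2}. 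All three are $O(qk+k\log k)$ and will be dominated.

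The new ingredient is in contribution (ii). In the stacked scheme, every index block occupies $\mu' = \ceil{\log_2(k{+}2)}$ cells, and a single physical batch of $2(k{-}1)\mu'$ cells can host the index information of $q-1$ consecutive encoding stages, since stage $\cE_r$ only needs cell levels $r{-}1$ and $r$ inside that batch (with a uniform level-raise performed at each transition, which is legal because cell levels can only increase). Therefore, to accommodate all $s{-}1$ stages $\cE_1,\ldots,\cE_{s-1}$ we use \smash{$\ceil{(s{-}1)/(q{-}1)}$} physical batches of index cells, for a total of
\[
2(k{-}1)\,\ceil{\log_2(k{+}2)}\,\ceilenv{\tfrac{s-1}{q-1}}
\]
index cells, contributing at most
\[
(q{-}1)\cdot 2(k{-}1)\,\ceil{\log_2(k{+}2)}\,\ceilenv{\tfrac{s-1}{q-1}}
\ = \ O\!\left(k\log k\,\cdot\,\max\{q,\log k\}\right)
\]
wasted cell levels, where I used $s=\ceil{\log_2 k}$ and $(q{-}1)\ceil{(s{-}1)/(q{-}1)}\le \max\{q{-}1,s{-}1\}+(q{-}1)$.

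A straightforward case split on this bound finishes the argument: if $q\ge\log_2 k$ then $\max\{q,\log k\}=O(q)$ and the index contribution is $O(qk\log k)$, which dominates all other contributions and gives the first bound; if $q<\log_2 k$ then $\max\{q,\log k\}=O(\log k)$ and the index contribution is $O(k\log^2 k)$, which again dominates and gives the second bound.

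I expect the only genuinely delicate step to be the justification that the stacked indexing does not corrupt the encoding/decoding logic of $\cE_r,\cD_r$: between $\cE_{r-1}$ and $\cE_r$ one must uniformly raise all index cells of the current physical batch to level $r{-}1$, and the subsequent procedures $\cT_r$ and $\cE_r$ must read ``$0$'' and ``$1$'' in their original pseudo-code as the symbols $r{-}1$ and $r$ in the stacked cells (and analogously $q^{\mu'}{-}1$ as the all-$r$ vector). Once this syntactic translation is spelled out, the correctness of $\cD_r,\cE_r$ carries over verbatim from Section\,\ref{sec4}, and the level count above goes through unchanged.
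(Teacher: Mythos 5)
Your proposal is correct and follows essentially the same route as the paper's proof: you reuse the decomposition from \Tref{th2}, re-count the index-block contribution as $2(q{-}1)(k{-}1)\ceil{(s{-}1)/(q{-}1)}\ceil{\log_2(k{+}2)}$ under stacked binary indexing, and observe that this term dominates, giving the two cases $O(qk\log k)$ and $O(k\log^2 k)$. Your additional remark on the correctness of the level-shifted reading of ``$0$''/``$1$'' in $\cT_r,\cE_r,\cD_r$ makes explicit something the paper only argues informally in the text preceding the theorem, but it does not change the accounting.
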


\Proof
With stacked binary indexing, the number of cell~le\-vels
wasted in {all} the $2(k\,{-}\,1)(s\,{-}\,1)$ index blocks
is~at~most~~
\be{main-eq1}
2(q-1)(k-1)\left\lceil \frac{s-1}{q-1}\right\rceil \ceil{\log_2(k{+}2)}
\ee
Although for most values of $k$ and $q$ this is strictly
less than~\Eq{th2-eq1}, all the other terms in \Eq{th2-eq2} are still
dominated by \Eq{main-eq1}.
\qed\vspace{1.5ex}

\noindent
{\bf Remark.}
If we need to store $k$ \emph{symbols}, rather than bits,
over an alphabet of size $\ell > 2$, the same flash code
can still be used, with an appropriate interface. With the linear
WOM-code of~\cite{RS82}, the $\ell$-ary symbols can be represented
using $\ell\,{-}\,1$ bits in such a way that any symbol change
corresponds to a single 
bit transition.
The flash code $\C'$ can be now applied as is, and
the resulting write deficiency is
$O\bigl(\max\{q,\log_2\! k\ell\}\, k\ell \log k\ell \bigr)$.

\section{Flash Codes of Constant Rate}\label{sec5}
All of our results so far pertain to the case where $n \ge k^2$.
In this section, we briefly examine the situation where both~$k$~and
$n$ are large, while $k/n = R$ for some constant $R < 1$. Observe
that write deficiency $\delta(\C) = n(q{-}1)-t$ is \emph{not}
an appropriate figure of merit in this situation: a trivial code that 
guarantees $t=0$ writes achieves write deficiency $n(q{-}1) = k(q{-}1)/R$,
which is within a constant factor $2/R$ from the lower
bound~\Eq{lower-bound}.\hspace*{-3pt}\linebreak
Thus we will state our results in terms of the
guaranteed~number of writes $t$
rather than the write deficiency $\delta(\C)$.

If $q=2$, we can easily guarantee $\Omega(n/\kern-1pt\log k)$
writes~as~follows:
partition the $n$ cells into blocks of size
\smash{$\ceil{\log_2 \!k}$} and each time an information bit changes,
record its index in the next available block.
For $q > 2$, the same method guarantees about
$\lfloor{n/\kern-1pt\log_q \!k}\rfloor = \Omega(n\log q/\kern-1pt\log k)$
writes, but we can do better.~~

Let us 
partition the $n$ cells into two groups: the
\emph{index group} consisting of $n-k$ cells and the
\emph{parity group} consisting of $k$ cells. The index
group is then subdivided into \smash{$m\:\,{=}\trunc{(n{-}k)/s}$}
blocks, each consisting of \smash{$s = \ceil{\log_2 \!k}$} cells.
The writing proceeds in $q-1$ phases. During the first phase,
every time an information bit changes, its index is recorded
in binary (using cell levels $0$ and $1$) in the next available
index block. After $m$ writes, the first phase is over. We then
copy the $k$ information bits into the $k$ cells of the parity
group, and raise the level of all cells in the index group to $1$.
The second phase can now proceed using cell levels $1$ and $2$,
and recording changes in information bits relative to the values
stored in the parity group. At the end of the second phase, the
current values of the $k$ bits are recorded in the parity cells
using levels $1$ and $2$, and so on.
This simple coding scheme achieves
\be{last}
m(q-1)
\ = \
\frac{n(q{-}1)(1-R)}{\log_2 k}
\ = \
\Omega\biggl(\frac{nq}{\log k}\biggr)
\ee
writes (where the middle expression ignores ceilings/floors~by
assuming that $k$ is a power of two and that $n-k$ is divisible by
$\log_2k$). If $q$ is odd and $R \ge 0.415$,
we can do a little~better by using the ternary number system
(cell levels $0$,\,$1$,\,$2$) in both the index group and the
parity group. In this case, the size of the parity group is
\smash{$\lceil k/\kern-1pt\log_2\!3\rceil$} cells and $1-R$
in~\Eq{last}~can~be~re-placed by $(\log_2\!3 - R)/2$. Finally,
for all $R \ge 0.755$ and~\mbox{$q\,{-}\,1$} divisible by three,
the quaternary
alphabet is optimal, leading to a factor of $(2\,{-}\,R)/3$ rather
than $1\,{-}\,R$ in \Eq{last}.

\section{Buffer Codes}\label{sec:buffer codes}
Buffer codes were first presented by Bohossian et al. in~\cite{BJB07}. In this family of codes, a buffer of $r$ symbols has to be stored in $n$ flash memory $q$-ary cells. After each write, the last $r$ symbols that were written have to be recovered by the cell-state vector. The goal is to maximize $t$, the number of write symbols that the code guarantees without incurring a block erasure. In~\cite{BJB07,JBB10}, an upper bound and a construction are presented for the case where the buffer is stored in a single cell. It is also shown how to store a buffer where, $n$, the number of cells satisfies $n\geq 2r$.

\subsection{Buffer Codes Definition}
We refer to the set of vectors in $\{0,\ldots,\ell-1\}^r$ as \emph{\dfn buffer vectors}. Similarly to a flash code, a buffer code $\C$ is also specified by an encoding map $\cE$ and a decoding map $\cD$. The \emph{\dfn decoding map} $\cD\!: \cA_q^n \to \{0,\ldots,\ell-1\}^r$ assigns for each cell-state vector $\bfx \,{\in}\, \cA_q^n$ its buffer vector $\cD(\bfx)$. The \emph{\dfn encoding map} \smash{$\cE\!: \{0,\ldots,\ell-1\} {\times} \cA_q^n \to \cA_q^n \cup \{\E\}$} specifies for every symbol $a\in\{0,\ldots,\ell-1\}$ and cell-state vector \smash{$\bfx \,{\in}\, \cA_q^n$},\, another cell-state vector $\bfy = \cE(a,\bfx)$ such that $y_j \,{\ge}\, x_j$ for all $1\leq j\leq n$, $(\cD(\bfy))_1=a$ and for $2\leq i\leq r$, $(\cD(\bfy))_i=(\cD(\bfx))_{i-1}$. In case such a $\bfy \,{\in}\, \cA_q^n$ does not exist, then $\cE(i,\bfx) = \E$. 
\begin{definition}
An $(n,r,\ell,t)_q$ buffer code\/ $\C(\cD,\cE)$ \emph{\dfn guarantees $t$ writes} if for all sequences of up to $t$ symbol writes, the encoding map $\cE$ does not produce the block erasure symbol\/ $\E$.
\end{definition}

\subsection{Single-Cell Buffer Codes}
In this section, we discuss the case where there is a single cell ($n=1$) to store the buffer. A construction for this scenario where a binary buffer ($\ell=2$) is stored was given in~\cite{BJB07,JBB10}. This construction guarantees at least $t=\left\lfloor\frac{q}{2^{r-1}}\right\rfloor +r-2$ writes before a block erasure. An upper bound was given as well, which asserts that for every buffer code with one cell, the number of writes $t$ has to satisfy $$t\leq \left\lfloor\frac{q-1}{\ell^r-1}\right\rfloor\cdot r+\left\lfloor\left((q-1)\bmod(\ell^r-1)+1\right)\right\rfloor.$$ Let us show here another upper bound for such codes. 
\begin{theorem}\label{thm:buffer bound}
For any $(1,r,\ell,t)_q$ buffer code $\C$ such that $q\geq \ell^r$, $$t\leq\left\lfloor\frac{q-\ell^r}{\frac{1}{r}\sum_{d|r}\varphi(\frac{r}{d})\ell^d}\right\rfloor+r,$$ where $\varphi$ is Euler's $\varphi$ function.
\end{theorem}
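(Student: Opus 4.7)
The plan is to exhibit, for any decoding map $\cD$ of a single-cell buffer code, an adversarial sequence of writes that exhausts the $q$ cell levels at or before the claimed bound. I would model each write as traversing an edge in the directed de Bruijn graph $B(\ell, r)$, whose $\ell^r$ vertices are the possible length-$r$ buffer states and whose edges out of $(v_1,\ldots,v_r)$, labeled by the written symbol $a$, go to $(a, v_1,\ldots,v_{r-1})$. Along any encoding trajectory, the cell level is monotonically non-decreasing and must strictly increase on every edge other than the $\ell$ self-loops at the constant vertices $(a,\ldots,a)$, since a change in the buffer forces the encoder to pick a strictly larger cell.

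The quantity $K = \tfrac{1}{r}\sum_{d \mid r}\varphi(r/d)\ell^d$ is, by Burnside's lemma, precisely the number of necklaces of length $r$ over an $\ell$-ary alphabet, equivalently the number of directed cycles of $B(\ell, r)$ up to cyclic rotation. My proof would try to extract two separate contributions to the cell budget. First, after the $r$-th write the buffer can be any of the $\ell^r$ states, and since distinct states force distinct cell levels, any valid $\cD$ must reserve at least $\ell^r$ cell levels for these ``initial'' configurations. Second, I would construct an adversary that, starting from any post-initialization vertex, forces each additional write to consume at least $K$ fresh cell levels on average; the two contributions together give $\ell^r + K(t-r) \leq q$, which rearranges to the claimed bound.

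The technical heart of the argument is the ``$K$ cells per write'' claim, and I expect this to be the main obstacle. My plan is to have the adversary enumerate the necklace classes in a fixed order, cycling once through each necklace's cycle in $B(\ell, r)$ and then transitioning to the next by writing a symbol that breaks the current period. Each full pass through a necklace of period $d$ consumes at least $d$ cell levels (one per rotation), summing to at least $\ell^r$ cells per ``round,'' while the bookkeeping forces the decoder to dedicate $K$ disjoint cells across the $K$ necklaces per round, yielding the amortized cost. The delicate points are handling short-period necklaces (where $d < r$, so the corresponding cycle in $B(\ell, r)$ is shorter than $r$) and excluding the self-loops at the constant vertices (which are free writes and must not be counted in the adversary's length), as well as showing that transitions between necklaces cannot be ``paid for'' by cells already accounted to a necklace. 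An alternative route would be a pigeonhole argument on $\cD^{-1}$: since $\cD$ partitions $[q]$ into $\ell^r$ classes of total mass $q$, some necklace-class has few cells, and the adversary uses the cycle through that necklace to force early erasure; I would combine this with the cycle structure of $B(\ell,r)$ to obtain the same $K$-factor, and then argue that the worst decoder achieves both estimates simultaneously.
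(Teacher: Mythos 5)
You identify the right combinatorial object (the de Bruijn graph $B(\ell,r)$), the right constant ($K = \frac{1}{r}\sum_{d\mid r}\varphi(r/d)\ell^d$, the number of $\ell$-ary necklaces of length $r$, which you correctly recognize via Burnside), and the right target inequality $\ell^r + K(t-r) \le q$. But the ``$K$ cells per write'' step, which you flag as the crux, is a genuine gap, and the adversarial route you sketch does not close it.

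The difficulty is conceptual. Your adversary walks one path through $B(\ell,r)$, so after $t-r$ post-initialization writes it has personally visited at most $t-r$ cell levels; a single trajectory simply cannot witness $K(t-r)$ distinct levels being consumed when $K>1$. The accounting has to be over the union of all possible futures, not one path. That is exactly what the paper's proof does: it defines $M_i = \sum_{\bfv}|S_i(\bfv)|$, where $S_i(\bfv)$ is the set of cell levels reachable by \emph{some} sequence of at most $i$ writes whose decode is $\bfv$, and shows that $M_{i+1} \ge M_i + K$, $M_r \ge \ell^r - 1$, and $M_t \le q-1$. Your ``pigeonhole on $\cD^{-1}$'' alternative has the same problem: it still steers a single path through one under-resourced necklace, so it cannot deliver the $K$-factor either.

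The second missing ingredient is the form in which $K$ must enter. Knowing $K$ is the necklace count is not enough; what the argument requires is Mykkeltveit's theorem (Golomb's conjecture for de Bruijn graphs): $B(\ell,r)$ admits a collection of $K$ \emph{vertex-disjoint} directed cycles. Fix such a packing. Around any (non-loop) cycle, the values $m_i(\bfv)=\max S_i(\bfv)$ cannot be strictly increasing all the way around, so some edge $(\bfv_1,\bfv_2)$ satisfies $m_i(\bfv_1) > m_i(\bfv_2)$; one more write from the level $m_i(\bfv_1)$ forces a strictly larger level that decodes to $\bfv_2$, hence $m_{i+1}(\bfv_2) > m_i(\bfv_2)$. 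Vertex-disjointness is precisely what guarantees these $K$ strict increases land in $K$ \emph{distinct} classes $S_{i+1}(\cdot)$, giving $M_{i+1} \ge M_i + K$ in one clean step. Your ``round-robin through necklaces'' adversary does not supply that disjointness, and the delicate points you list (self-loops at constant vertices, short-period necklaces, inter-necklace transitions) are symptoms of that missing structural fact rather than local technicalities to be patched. You were right to be uneasy about the self-loops in particular; making the argument rigorous there is nontrivial even in the paper's potential-function framing.
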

\begin{proof}
Let $\C(\cD,\cE)$ be a $(1,r,\ell,t)_q$ buffer code.
After $i\geq 1$ writes, for each $\bfv\in\{0,1,\ldots,\ell-1\}^r$, let
\begin{align*}
S_i(\bfv)=\{x\ | \ & \textrm{there is a sequence of $j\leq i$ symbol } &  \\
& \textrm{writes ending in level $x$ and $\cD(x)=\bfv$}\}, &
\end{align*}
$m_i(\bfv) = \max_{x\in S_i(\bfv)}\{x\}$ is the maximum cell level that is possible to reach after $i$ symbol writes such that $\cD(m_i(\bfv))=\bfv$, and
$$M_i = \sum_{\bfv\in\{0,\ldots,\ell-1\}^r}|S_i(\bfv)|.$$ Clearly, for all $i\leq t$, $M_i\leq q-1$.  After $r$ writes, it is possible to reach any of the $\ell^r$ different buffer vectors and thus $M_r\geq \ell^r-1$.

Let $\cG_{\ell,r}$ be the $r$-th order $\ell$-ary de Bruijn graph~\cite{B46}. Its vertex set is $\cV_{\ell,r}=\{0,1,\ldots,\ell-1\}^r$ and its edge set is $\cE_{\ell,r}$. Let $\bfv_1,\bfv_2\in \{0,1,\ldots,\ell-1\}^r$ be two different buffer states. Note that if $(\bfv_1,\bfv_2)\in\cE_{\ell,r}$ and $m_i(\bfv_1)>m_i(\bfv_2)$ then $m_{i+1}(\bfv_2)>m_i(\bfv_2)$ and therefore, the value of $M_{i+1}$ increases by at least one level for every such an edge. In the de~Bruijn graph, every cycle has at least one edge $(\bfv_1,\bfv_2)\in\cE_{\ell,r}$ such that $m_i(\bfv_1)>m_i(\bfv_2)$. Therefore, the number of new unused levels is at least the number of disjoint vertex cycles in $\cG_{\ell,r}$. This number is known to be $\frac{1}{r}\sum_{d|r}\varphi(\frac{r}{d})\ell^d$~\cite{M72,G82}, and therefore $$t\leq\left\lfloor\frac{q-\ell^r}{\frac{1}{r}\sum_{d|r}\varphi(\frac{r}{d})\ell^d}\right\rfloor+r.$$
\end{proof}

\begin{lemma}
The bound in Theorem~\ref{thm:buffer bound} improves the bound in~\cite{BJB07} for $q\geq \ell^r$. That is,
\begin{align*}
& \left\lfloor\frac{q-\ell^r}{\frac{1}{r}\sum_{d|r}\varphi(\frac{r}{d})\ell^d}\right\rfloor +r & \\
& \leq \left\lfloor\frac{q-1}{\ell^r-1}\right\rfloor\cdot r + \left\lfloor\log_{\ell}\big(((q-1)\bmod (\ell^r-1))+1\big)\right\rfloor.&
\end{align*}
\end{lemma}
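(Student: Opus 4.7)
The plan is to reduce the target to a single clean inequality about the necklace count. I would write $q - 1 = A(\ell^r - 1) + B$ with $0 \le B \le \ell^r - 2$, so that the right hand side equals $Ar + \lfloor \log_\ell(B+1)\rfloor$ and $q - \ell^r = (A-1)(\ell^r - 1) + B$; the hypothesis $q \ge \ell^r$ gives $A \ge 1$. Throughout, let $N = \frac{1}{r}\sum_{d|r}\varphi(r/d)\ell^d$ denote the necklace count appearing on the left hand side.

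The first step is the bound $\lfloor (q-\ell^r)/N\rfloor \le (A-1)r + \lfloor B/N\rfloor$, which after adding $r$ makes the left hand side at most $Ar + \lfloor B/N\rfloor$. I would obtain this by isolating the $d=r$ term to get $rN \ge \ell^r$, writing $\ell^r - 1 = rN - \delta$ with an integer $\delta \ge 1$, and observing that
\[
\frac{(A-1)(\ell^r-1) + B}{N} \;=\; (A-1)r \,+\, \frac{B - (A-1)\delta}{N}.
\]
Taking floors, and using that $(A-1)r$ is an integer while $(A-1)\delta \ge 0$, yields the claimed bound.

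It then remains to establish $\lfloor B/N\rfloor \le \lfloor \log_\ell(B+1)\rfloor$ for $0 \le B \le \ell^r - 2$. Setting $k = \lfloor \log_\ell(B+1)\rfloor \in \{0,1,\ldots,r-1\}$, the worst case $B = \ell^{k+1} - 2$ reduces the claim to $(k+1)N \ge \ell^{k+1} - 1$, which I would deduce from the stronger statement $(k+1)N \ge \ell^{k+1}$. Using again $N \ge \ell^r / r$, this in turn reduces to $(r-j)\ell^j \ge r$ with $j = r - k - 1 \in \{0,\ldots,r-1\}$. The case $j=0$ is an equality, and for $j \ge 1$ with $\ell \ge 2$ it follows from a one-line check that $g(j) := (r-j)2^j$ is non-decreasing on $\{0,1,\ldots,r-1\}$, since its forward difference $g(j+1) - g(j) = 2^j(r-j-2)$ is non-negative for $j \le r-2$; hence $g(j) \ge g(0) = r$.

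The main obstacle is the key inequality $(k+1)N \ge \ell^{k+1}$, which couples the necklace estimate $N \ge \ell^r/r$ with the particular exponents appearing on the two sides of the target. Once this coupling is in hand, the remainder of the argument is routine division-with-remainder bookkeeping.
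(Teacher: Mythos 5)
Your proposal is correct, and it follows essentially the same high-level decomposition as the paper's proof: write $q-1$ via division by $\ell^r-1$, use the single term $\ell^r/r$ in the necklace sum $N$ to peel off the multiple of $r$, and reduce to comparing $\lfloor B/N\rfloor$ with $\lfloor\log_\ell(B+1)\rfloor$. The genuine divergence is in the last step, and here your version actually repairs a gap in the paper. The paper attempts to establish the \emph{unfloored} inequality $\frac{(y+1)r}{\ell^r}\le\log_\ell(y+1)$ by asserting that $f(x)=x^{1/x}$ is monotonically decreasing for $x\ge 1$; this is false ($f$ increases on $[1,e]$), and the unfloored inequality itself fails, e.g.\ at $y=0$ where the left side is $r/\ell^r>0$ while the right side is $0$, and at $y=1,\ell=3,r=1$ where $2/3>\log_3 2$. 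Only the floored inequality is needed, and it does hold, but the paper's argument does not deliver it. You avoid the issue entirely by staying with integers: you pass to the worst-case $B=\ell^{k+1}-2$, reduce to $(k+1)N\ge\ell^{k+1}$, further to $(r-j)\ell^j\ge r$ with $j=r-k-1$, and prove that by the forward-difference monotonicity of $g(j)=(r-j)2^j$ on $\{0,\ldots,r-1\}$. That elementary discrete argument is both correct and tighter than what the paper attempts. (One other cosmetic difference: the paper carries the extra term $\ell\varphi(r)$ in its lower bound for $N$, but never uses it; your $N\ge\ell^r/r$ suffices, as your derivation shows.)
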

\begin{proof}
Note that $$\frac{1}{r}\sum_{d|r}\varphi\left(\frac{r}{d}\right)\ell^d \geq \frac{\ell^{r}+\ell\varphi\left(r\right)}{r},$$ and therefore
\begin{align*}
& \left\lfloor\frac{q-\ell^r}{\frac{1}{r}\sum_{d|r}\varphi(\frac{r}{d})\ell^d}\right\rfloor +r \leq \left\lfloor\frac{q-\ell^r}{\frac{\ell^{r}+\ell\varphi(r)}{r}}\right\rfloor +r & \\
& = \left\lfloor\frac{q-\ell^r}{\ell^{r}+\ell\varphi(r)}\cdot r\right\rfloor +r = \left\lfloor\frac{q+\ell\varphi(r)}{\ell^{r}+\ell\varphi(r)}\cdot r\right\rfloor. &
\end{align*}
If we denote $q-1=x(\ell^r-1)+y$, where $0\leq y\leq \ell^r-1$, then
\begin{align*}
& \left\lfloor\frac{q-\ell^r}{\frac{1}{r}\sum_{d|r}\varphi(\frac{r}{d})\ell^d}\right\rfloor +r \leq \left\lfloor\frac{q+\ell\varphi(r)}{\ell^{r}+\ell\varphi(r)}\cdot r\right\rfloor & \\
& = \left\lfloor\frac{x(\ell^r-1)+y+1+\ell\varphi(r)}{\ell^{r}+\ell\varphi(r)}\cdot r\right\rfloor & \\
& = \left\lfloor\frac{x(\ell^r+\ell\varphi(r))-x+y+1-(x-1)\ell\varphi(r)}{\ell^{r}+\ell\varphi(r)}\cdot r\right\rfloor & \\
& = xr + \left\lfloor\frac{-x+y+1-(x-1)\ell\varphi(r)}{\ell^{r}+\ell\varphi(r)}\cdot r\right\rfloor & \\
& \leq xr + \left\lfloor\frac{(y+1)r}{\ell^{r}}\right\rfloor. &
\end{align*}
Let us show that $\frac{(y+1)r}{\ell^r}\leq \log_{\ell}(y+1)$. That is, we show that $(y+1)\geq \ell^{\frac{(y+1)r}{\ell^r}}$ or $$\left((y+1)^{\frac{1}{y+1}}\right)^{\ell^r}\geq \ell^r.$$ The function $f(x)= x^{\frac{1}{x}}$ is monotonically decreasing for $x\geq~1$ and since $y\leq \ell^r-1$, we get $$\left((y+1)^{\frac{1}{y+1}}\right)^{\ell^r}\geq \left((\ell^r)^{\frac{1}{\ell^r}}\right)^{\ell^r} = \ell^r.$$
Putting these together we get
\begin{align*}
& \left\lfloor\frac{q-\ell^r}{\frac{1}{r}\sum_{d|r}\varphi(\frac{r}{d})\ell^d}\right\rfloor +r \leq xr + \left\lfloor\frac{(y+1)r}{\ell^{r}}\right\rfloor & \\
& \leq xr + \left\lfloor\log_{\ell}(y+1)\right\rfloor & \\
& = \left\lfloor\frac{q-1}{\ell^r-1}\right\rfloor\cdot r + \left\lfloor\log_{\ell}\big(((q-1)\bmod (\ell^r-1))+1\big)\right\rfloor. &
\end{align*}
\end{proof}

\subsection{Multiple-Cells Buffer Codes}
In~\cite{BJB07,JBB10}, a buffer code construction is given for $\ell=2$ and arbitrary $n,q,r$, where $n\geq 2r$. This construction guarantees $t=(q-1)(n-2r+1) +r-1$ writes. In this section, we show how to improve this construction such that the guaranteed number of writes is $t=(q-1)(n-r)$.

In the case where $q=2$, the construction in~\cite{BJB07,JBB10} guarantees $n-r$ writes. The encoding procedure is performed in such a way that after $i$ writes, $1\leq i\leq n-r$, the buffer is located between the $(i+1)$-st and $(i+r)$-th cells, where the first bit of the buffer memory is stored in the $(i+r)$-th cell and the last bit is stored in the $(i+1)$-st cell. If $q>2$, then the construction uses a ``layer by layer'' approach. That is, first the layer of levels $0$ and $1$ is used, then the layer of levels $1$ and $2$ is used, and so on. In the transition from the layer of levels $i-1$ and $i$ to the layer of levels $i$ and $i+1$, all the cells are first reset to level $i$ and the buffer is written in the new layer of levels $i$ and $i+1$. Then, it is possible to continue writing in this layer. Basically, on each layer, it is possible to write $n-r$ times. However, when a new layer is used, then first the buffer from the previous layer is copied and then it is written in the new layer. Hence, it is possible to have only $(n-2r+1)$ more writes in the new layer and thus the total number of writes is $$n-r + (q-2)(n-2r+1) = (q-1)(n-2r+1) +r-1.$$

The transition between these consecutive layers is not performed efficiently and our improvement here shows how it is possible to write $n-r$ times on each layer such that the total number of writes is $t=(q-1)(n-r)$. We first demonstrate how the construction works by the following example.
\begin{example}
In this example, we show how the last construction works for $n=11, q=3,\ell=2$ and $r=4$, so the number of writes is $2\cdot (11-4)=14$. The sequence
of bits to be written is $1,1,0,0,1,0,0,1,1,1,0,1,1,0$ and the writes are performed as follows. The underlined cells represent the cells that store the buffer on each write.
$$\begin{tabular}{|c|c|c|}
\hline  Written Bit & Buffer State & Cell State Vector
\\\hline \hline
      & $(0,0,0,0)$ & $(\underline{0,0,0,0},0,0,0,0,0,0,0)$ \\
  $1$ & $(0,0,0,1)$ & $(0,\underline{0,0,0,1},0,0,0,0,0,0)$ \\
  $1$ & $(0,0,1,1)$ & $(0,0,\underline{0,0,1,1},0,0,0,0,0)$ \\
  $0$ & $(0,1,1,0)$ & $(1,0,0,\underline{0,1,1,0},0,0,0,0)$ \\
  $0$ & $(1,1,0,0)$ & $(1,1,0,0,\underline{1,1,0,0},0,0,0)$ \\
  $1$ & $(1,0,0,1)$ & $(1,1,0,0,1,\underline{1,0,0,1},0,0)$ \\
  $0$ & $(0,0,1,0)$ & $(1,1,1,0,1,1,\underline{0,0,1,0},0)$ \\
  $0$ & $(0,1,0,0)$ & $(1,1,1,1,1,1,0,\underline{0,1,0,0})$ \\
  $1$ & $(1,0,0,1)$ & $(1,1,1,1,\underline{2},1,1,1,\underline{1,0,0})$ \\
  $1$ & $(0,0,1,1)$ & $(1,1,1,1,\underline{2,2},1,1,1,\underline{0,0})$ \\
  $1$ & $(0,1,1,1)$ & $(1,1,1,1,\underline{2,2,2},1,1,1,\underline{0})$ \\
  $0$ & $(1,1,1,0)$ & $(2,1,1,1,\underline{2,2,2,1},1,1,1)$ \\
  $1$ & $(1,1,0,1)$ & $(2,1,1,1,2,\underline{2,2,1,2},1,1)$ \\
  $1$ & $(1,0,1,1)$ & $(2,1,1,1,2,2,\underline{2,1,2,2},1)$ \\
  $0$ & $(0,1,1,0)$ & $(2,1,1,1,2,2,2,\underline{1,2,2,1})$ \\ \hline
\end{tabular}$$
\end{example}

Now we are ready to present the construction by specifying its encoding and decoding maps specification.

\noindent
\textbf{Decoding map $\cD_{\textmd{buf}}$\kern1pt:}
The input to this map is a cell-state vector $\bfx = (x_1,x_2,\ldots,x_n)$. The output is the corresponding information buffer vector $(v_1,v_2,\ldots,v_r)$.
\codemode{%
~$m$\:=\:max($x_1,x_2,\ldots,x_n$);\\[0ex]
~$n_m$\:=\:find\_repeat($m,x_1,x_2,\ldots,x_n$);\\[0ex]
~if($n_m\geq r$)\\[0ex]
~\hspace{2ex}for($i$\;=\;$1$;\;$i\leq r$;\;$i$\;=\;$i$\;+\;$1$)\\[0ex]
~\hspace{4ex}$v_{i}$\;=\;$x_{r+n_m-i+1}$\;-\;$m$;\\[0ex]
~else\;\Copen\\[0ex]
~\hspace{2ex}for($i$\;=\;$1$;\;$i\leq n_m$;\;$i$\;=\;$i$\;+\;$1$)\\[0ex]
~\hspace{4ex}$v_{i}$\;=\;$x_{r+n_m-i+1}$\;-\;$m$;\\[0ex]
~\hspace{2ex}for($i$\;=\;$n_m$\;+\;$1$;\;$i\leq r$;\;$i$\;=\;$i$\;+\;$1$)\\[0ex]
~\hspace{4ex}$v_{i}$\;=\;$x_{n+n_m-i+1}$\;-\;($m$\;-\;$1$);\Cclose\\[0ex]
}

The function {\code max($x_1,x_2,\ldots,x_n$)} simply returns the maximum value of the cells $x_1,x_2,\ldots,x_n$. The function {\code find\_repeat($m,x_1,x_2,\ldots,x_n$)} returns the number of times the value $m$ repeats in the cells $x_1,x_2,\ldots,x_n$. If the value of $n_m$ is at least $r$ then the buffer is stored between the $(n_m+1)$-st and $(n_m+r)$-th cells, and the buffer values are calculated by subtracting $m$ from the value of each cell. If the value of $n_m$ is less than $r$ then the buffer is stored cyclically in two cell groups: the last $r-n_m$ cells and the $n_m$ cells in locations $r+1,\ldots,r+{n_m}$. In the first group, the buffer values are given by subtracting $m-1$ from the cells' value and in the second group by subtracting $m$ from the cells' value.

\noindent
\textbf{Encoding map $\cE_{\textmd{buf}}$\kern1pt:}
The input to this map is a cell-state vector $\bfx = (x_1,x_2,\ldots,x_n)$, and a new bit $b$. Its output is either a cell-state vector $\bfy = (y_1,y_2,\ldots,y_n)$ or the erasure symbol $\E$.\hspace*{3ex}
\codemode{%
~($y_1,y_2,\ldots,y_n$)\;=\;($x_1,x_2,\ldots,x_n$); \\[0ex]
~$m$\:=\:max($x_1,x_2,\ldots,x_n$);\\[0ex]
~$n_m$\:=\:find\_repeat($m,x_1,x_2,\ldots,x_n$);\\[0ex]
~if($m$\;==\;$0$)\;\Copen\;   /\!/\;\textrm{\sl if this is the first write}       \\[0ex]
~\hspace{2ex}if($b$\;==\;$1$)\;$y_{r+1}$\;=\;$1$;\;\\[0ex]
~\hspace{2ex}else $y_{1}$\;=\;$1$;\;\Cclose\\[0ex]
~if($n_m$\;==\;$n$\;-\;$r$)\;\Copen\;   /\!/\;\textrm{\sl first write in this layer}\\[0ex]
~\hspace{2ex}for($i$\;=\;$1$;\;$i\leq n$\;-\;$r$\;+\;$1$;\;$i$\;=\;$i$\;+\;$1$)\\[0ex]
~\hspace{4ex}$y_i$\;=\;$m$;\;\\[0ex]
~\hspace{2ex}if($b$\;==\;$1$)\;$y_{r+1}$\;=\;$m$\;+\;$1$;\;\\[0ex]
~\hspace{2ex}else $y_{1}$\;=\;$m$\;+\;$1$;\;\Cclose\\[0ex]
~if($n_m < n$\;-\;$r$)\;\Copen\; /\!/\;\textrm{\sl not the first write in this layer} \\[0ex]
~\hspace{2ex}$y_{r+n_m+1}$\;=\;$y_{r+n_m+1}$\;+\;$b$;\\[0ex]
~\hspace{2ex}if($b$\;==\;$0$)\\[0ex]
~\hspace{4ex}for($i$\;=\;$1$;\;$i\leq n_m$\;+\;$r$;\;$i$\;=\;$i$\;+\;$1$)\\[0ex]
~\hspace{6ex}if($y_i$\;==\;$m$\;-\;$1$)\;\Copen\\[0ex]
~\hspace{8ex}$y_{r+n_m+1}$\;=\;$y_{r+n_m+1}$\;+\;$1$;\;break;\;\Cclose\;\Cclose\\[0ex]
~if($n_m \leq r$\;-\;$1$)\; /\!/\;\textrm{\sl one of first $r-1$ writes in this layer} \\[0ex]
~\hspace{4ex}$y_{n-r+1+n_m}$\;=\;$m$\;-\;$1$;\;\\[0ex]
}

On the first write, according to the bit value $b$, the first or the $(r+1)$-st cell changes its value to one. On the first write on each layer, the first $n-r+1$ cells are increased to level $m$, and then the first or the $(r+1)$-st cell is increased by one level, according to the bit value $b$. For all other writes, if the value if $b$ is one then we simply increase the $(r+n_m+1)$-st cell by one level, and otherwise we increase the first cell of level $m-1$ by one level. Finally, if it is one of the first $r-1$ writes in each level, then we need to update the last cell that stores the buffer to level $m-1$ since it no longer stores the buffer and thus its level has to be updated. 

Next, we prove the correctness of the construction.
\begin{lemma}\label{lem:max value}
After $s=x(n-r)+y$, where $1\leq y\leq n-r$, the maximum cell level is $x+1$ and there are $y$ cells in level $x+1$.
\end{lemma}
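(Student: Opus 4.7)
I would proceed by induction on the number of writes $s$. For the base case $s=1$ (so $x=0$, $y=1$), starting from the all-zero cell-state vector we have $m=0$, and the encoder $\cE_{\textmd{buf}}$ enters the ``$m == 0$'' branch, promoting either $y_1$ or $y_{r+1}$ to level $1$ according to the bit value. Hence after the first write the max level is $1$, attained at exactly one cell, as claimed.

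For the induction step, assume the claim holds after $s = x(n-r)+y$ writes and examine write $s+1$. I would split into two cases according to whether $y<n-r$ (write $s+1$ falls inside the current layer) or $y=n-r$ (write $s+1$ initiates a new layer). In the first case, the induction hypothesis gives $m=x+1$ and $n_m=y<n-r$, so the encoder takes the ``$n_m<n-r$'' branch; this increments either $y_{r+n_m+1}$ (for $b=1$) or the first cell at level $m-1$ (for $b=0$), while the auxiliary assignment $y_{n-r+1+n_m}=m-1$ executed during the early writes only touches a cell at level $<m$. The key verification is that the promoted cell is at level $m-1=x$ before the write, so $n_m$ grows from $y$ to $y+1$ and the max stays at $x+1$, matching $s+1 = x(n-r)+(y+1)$.

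In the second case, $s+1 = (x+1)(n-r)+1$, and we must show the max jumps to $x+2$ with exactly one cell attaining it. With $n_m=n-r$ the encoder takes the ``$n_m == n-r$'' branch, first lifting any lagging cells among $y_1,\ldots,y_{n-r+1}$ up to $m=x+1$ and then incrementing either $y_1$ or $y_{r+1}$ to $m+1=x+2$. The inductive invariant guarantees that every cell in positions $n-r+2,\ldots,n$ is at level at most $m$, so the resulting max is exactly $x+2$, attained by the single promoted cell, which is the required conclusion. Combined with the first case this closes the induction.

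The main obstacle is the ``$b=1$'' subcase inside the first case: since the encoder promotes the \emph{fixed} position $y_{r+n_m+1}$, one must show that this specific cell is at level $m-1$ when the write begins (the $b=0$ subcase is automatic because it selects a level-$(m-1)$ cell by construction). To handle this cleanly I would strengthen the induction hypothesis to also record the precise locations of the level-$(x+1)$ and level-$x$ cells as a function of $y$, including how the updates $y_{n-r+1+n_m}=m-1$ performed during the first $r-1$ writes of a layer populate positions $n-r+2,\ldots,n$ with level-$x$ cells in preparation for the next layer. Once this positional invariant is carried through the induction, the remaining verifications reduce to a direct inspection of the encoding pseudocode.
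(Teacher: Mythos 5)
Your inductive argument is essentially the paper's own reasoning, just written out carefully: the paper observes the same facts (the maximum level increments only on writes of the form $i(n-r)+1$, and every other write raises exactly one cell up to the current max) tersely and without spelling out the case analysis on the encoder branches that your induction step makes explicit. You rightly flag that the positional invariant guaranteeing the promoted cell sits at level $m-1$ before promotion is left unjustified; the paper glosses over exactly that same point, so your plan is a more explicit rendering of the same argument rather than a different route.
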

\begin{proof}
According to the encoding map $\cE_{\textmd{buf}}$, the maximum cell level increases every $n-r$ writes, on the $(i(n-r)+1)$-st write, for $0\leq i\leq q-2$. Therefore, after $s$ writes, the maximum cell value is $x=\left\lceil\frac{s}{n-r}\right\rceil$. 
If $y=1$ then the maximum cell value is $x+1$ and we can see that exactly one cell changes its value to $x+1$. For all other writes, the maximum cell value does not change and exactly one cell changes its value to the maximum cell value which is $x+1$.
\end{proof}
\begin{theorem}
The buffer code $\C(\cD_{\textmd{buf}},\cE_{\textmd{buf}})$ stores the buffer successfully and guarantees $t=(q-1)(n-r)$ writes.
\end{theorem}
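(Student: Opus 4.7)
The plan is to establish an explicit \emph{structural invariant} on the cell-state vector after each write, and then verify that (i) the encoding map $\cE_{\textmd{buf}}$ never returns $\E$ during the first $(q-1)(n-r)$ writes, and (ii) the decoding map $\cD_{\textmd{buf}}$ correctly recovers the last $r$ symbols written whenever the invariant holds. By Lemma~\ref{lem:max value}, after $s=x(n-r)+y$ writes with $1\le y\le n-r$ the maximum cell level is $m=x+1$ and exactly $n_m=y$ cells are at level $m$, so only a careful description of \emph{which} cells carry level $m$ is still needed.

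First I would formulate the invariant. Writing $m$ and $n_m$ as above, I claim that at the end of the $s$-th write the cell-state vector has the following form. If $n_m\ge r$, then cells $1,\ldots,n_m$ and cells $n_m+r+1,\ldots,n$ carry values in $\{m-1,m\}$ that no longer encode buffer data, while cells $n_m+1,\ldots,n_m+r$ store the current buffer $(v_1,\ldots,v_r)$ via $x_{n_m+r-i+1}=m+v_i$. If $n_m<r$, the buffer is stored cyclically: cells $r+1,\ldots,r+n_m$ hold the first $n_m$ entries as $x_{r+n_m-i+1}=m+v_i$ for $1\le i\le n_m$, and cells $n-r+n_m+1,\ldots,n$ hold the remaining entries as $x_{n+n_m-i+1}=(m-1)+v_i$ for $n_m+1\le i\le r$, with all other cells at levels $\{m-2,m-1\}$ carrying stale data. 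Note that this invariant is exactly what $\cD_{\textmd{buf}}$ reads off, so once the invariant is proven the decoding correctness is immediate.

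Next I would prove the invariant by induction on $s$. The base case $s=1$ follows directly from the first \texttt{if} branch of $\cE_{\textmd{buf}}$, which writes $b$ into cell $r+1$ (if $b=1$) or into cell $1$ (if $b=0$), correctly establishing $n_m=1$ in the wrap-around form. For the inductive step I would split into three cases matching the branches of the encoder: (a) the \emph{layer-initiating} write, $n_m=n-r$, in which cells $1,\ldots,n-r+1$ are raised to $m$ and then one of cells $1$ or $r+1$ is pushed to $m+1$ --- checking that this yields the $n_m=1$ instance of the invariant at the new level $m+1$ while preserving the buffer content (the cells $n-r+2,\ldots,n$ that held the buffer at level $m$ become the correct stale-at-$m-1$ cells of the new layer, via the final \texttt{if($n_m\le r-1$)} clause that rewrites them); (b) an ordinary write with $b=1$, which simply promotes cell $r+n_m+1$ from level $m-1$ to $m$, extending the buffer window by one; (c) an ordinary write with $b=0$, handled by promoting the first cell at level $m-1$ to $m$, together with the trailing cleanup that lifts the now-stale buffer cell to level $m-1$. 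In each case a short calculation confirms that the new cell-state again matches the invariant with $n_m$ increased by one (or reset to $1$ in case (a)) and that the shifted buffer $(b,v_1,\ldots,v_{r-1})$ is stored in the prescribed positions.

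Finally, I would combine everything: the invariant implies that at every intermediate stage $\cE_{\textmd{buf}}$ only needs to promote a cell from level $m-1$ to $m$ (or, once per layer, from $m$ to $m+1$), which is admissible as long as $m\le q-1$. Since exactly $q-1$ layers are traversed, each supporting $n-r$ writes, the encoder never returns $\E$ during the first $(q-1)(n-r)$ writes. The main obstacle is purely bookkeeping, namely the layer-transition case (a): one must check that the promotion of cells $1,\ldots,n-r+1$ to level $m$ correctly ``shifts'' the cyclic buffer representation from the old layer into the new one and that the final \texttt{if($n_m\le r-1$)} clause in $\cE_{\textmd{buf}}$ resets precisely the right cell so that $\cD_{\textmd{buf}}$'s two-group reading (last $r-n_m$ cells at level $m-1$, then $n_m$ cells at level $m$ in positions $r+1,\ldots,r+n_m$) gives back the freshly written buffer.
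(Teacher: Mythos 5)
Your proposal follows essentially the same route as the paper: both rely on Lemma~\ref{lem:max value} to bound the maximum cell level (hence the number of layers) so that $\cE_{\textmd{buf}}$ never returns $\E$ within $(q-1)(n-r)$ writes, and both establish decoding correctness by induction on the write count $s$, taking the decoding rules themselves as the inductive invariant and splitting into the layer-initiating case $n_m=n-r$ versus the ordinary-write case $n_m<n-r$. Your write-up is a bit more explicit — you spell out exactly which cells are ``stale'' and at which residual levels, and you treat $b=0$ and $b=1$ as separate sub-cases — but these are expository refinements of the paper's argument rather than a different proof strategy.
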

\begin{proof}
According to Lemma~\ref{lem:max value}, after $t=(q-1)(n-r)$ writes the maximum cell level does not reach level $q$ and hence there is no need to erase the block of cells. We prove the correctness of the encoding and decoding maps to store the correct value of the buffer by induction on the number of writes $s$. This is done by proving that for all $1\leq s\leq t$, such that $s=x(n-r)+y$, where $1\leq y\leq n-r$, the buffer $(v_1,\ldots,v_r)$ is calculated successfully according to the decoding rules of the decoding map:
\begin{enumerate}
\item If $y\geq r$ then for $1\leq i\leq r$, $v_i = x_{r+y-i+1}-m$.
\item If $y<r$ then for $1\leq i\leq y$, $v_i = x_{r+y-i+1}-m$ and for $y+1\leq i\leq r$, $v_i = x_{r+y-i+1}-(m-1)$.
\end{enumerate}
It is straightforward to verify that after the first write the memory successfully stores the buffer. Assume the assertion is correct after the $s$-th write, where $1\leq s=x(n-r)+y\leq t-1$, $1\leq y\leq n-r$. Assume that the new bit to be written to the buffer on the $(s+1)$-st write is $b$ and let us consider the following cases:
\begin{enumerate}
\item If $y=n-r$, then on the $(s+1)$-st write in the encoding map the value of $n_m$ is $n-r$. Thus the first $n-r+1$ cells change their value to $m=x$, the values of the last $r-1$ cells do not change, and if $b=1$ then $y_{r+1}=m+1$, and otherwise $y_1=m+1$. Therefore, the new value of the buffer is also given according to the decoding rules.
\item If $y<n-r$, then $n_m=y<n-r$, and the value of the $(r+n_m+1)$-st cell increases by $b$ so the buffer is shifted one place to the right and it stores its updated value. If $b=0$, then we increase the first $n_m+1$ cells by one level. Note that $n_m=y$ and there are exactly $y$ cells with the maximum value so we can always find a cell of value less than $m$ and increase the value to $m$. Then, the buffer is again stored according to the above decoding rules.
\end{enumerate}
\end{proof}

\section{Conclusion}\label{sec:conclusion}
Rewriting codes for flash memories are important as they can increase the lifetime of the memory. Examples of such codes are flash codes~\cite{JBB07} and buffer codes~\cite{BJB07}. A significant contribution in this paper is an efficient construction of flash codes that support the storage of any number of bits. We show that the write deficiency order of the code is $O(k\log k \cdot \max\{\log_2 k,q\})$, which is an improvement upon the write deficiency order of the equivalent constructions in~\cite{JB08,JBB10,YVSW08}. The upper bound in~\cite{JBB07} on the guaranteed number of writes implies that the order of the lower bound on the deficiency is $O(kq)$. Therefore, there is a gap, which we believe can be reduced, between the write deficiency orders of our construction and the lower bound. For buffer codes, we showed how to improve an upper bound on the number of writes in the case where one cell is used to store the buffer. If there are multiple cells, we showed a construction that improves upon the one presented in~\cite{BJB07,JBB10}.

\end{document}